\newcommand \equref[1]{Eq. (\ref{#1})}
\newcommand \figref[1]{Fig. \ref{#1}}
\newcommand \algoref[1]{Algotithm \ref{#1}}
\newcommand \chapref[1]{Section \ref{#1}}
\newtheorem{remark}{Remark}
\newtheorem{lemma}{Lemma}
\begin{document}

\title{Multi-agent Reinforcement Traffic Signal Control based on Interpretable Influence Mechanism \\ and Biased ReLU Approximation 
\thanks{This work was supported in part by the National Natural Science Foundation of China under Grant 62173113, and in part by the Science and Technology Innovation Committee of Shenzhen Municipality under Grant GXWD20231129101652001, and in part by Natural Science Foundation of Guangdong Province of China under Grant 2022A1515011584.(\textit{Corresponding author: Jun Xu.})}}

\author{Zhiyue Luo \thanks{Zhiyue Luo, Jun Xu, and Fanglin Chen are with the School of Mechanical Engineering and Automation, Harbin Institute of Technology (Shenzhen), Shenzhen 518055, China (email: xujunqgy@hit.edu.cn).},
        Jun Xu,
        Fanglin Chen
        % <-this % stops a space
}
\maketitle

% The paper headers
% \markboth{Journal of \LaTeX\ Class Files,~Vol.~14, No.~8, August~2015}%
% {Shell \MakeLowercase{\textit{et al.}}: Bare Demo of IEEEtran.cls for IEEE Journals}

% % make the title area
% \maketitle

% As a general rule, do not put math, special symbols or citations
% in the abstract or keywords.
\begin{abstract}
% 我的论文主要解决的问题是：可解释的relation建模，从而解决多路口信号灯的协同控制问题。
% 交通信号灯控制很重要，
Traffic signal control is important in intelligent transportation system, of which cooperative control is difficult to realize but yet vital.
% 现有的方法，multi-agent 建模为网格，
% % 修改语法
% Many existing methods model multi-intersection traffic networks as grids and address the problem using MARL (RL).
% % 目前的barrier和challenges和drawback
% However, these studies overlooked the connectivity and globality of the networks, failing to model the interactions between upstream and downstream intersections effectively.
% Moreover, the neural networks commonly employed in deep RL are black boxes with complex architectures and are insufficient in capturing the impact of input data on the output.
% Many existing methods model multi-intersection traffic networks as grids and address the problem using MARL (RL).
% However, these studies overlooked the connectivity and globality of the networks, failing to capture the spatiotemporal traffic information with efficient neural networks in deep RL.
Many methods model multi-intersection traffic networks as grids and address the problem using multi-agent reinforcement learning (RL). Despite these existing studies, there is an opportunity to further enhance our understanding of the connectivity and globality of the traffic networks by capturing the spatiotemporal traffic information with efficient neural networks in deep RL.

% Method
% To this end,
In this paper, we propose a novel multi-agent actor-critic framework based on an interpretable influence mechanism with a centralized learning and decentralized execution method.
Specifically, we first construct an actor-critic framework, for which the piecewise linear neural network (PWLNN), named biased ReLU (BReLU), is used as the function approximator
to obtain a more accurate and theoretically grounded approximation.
% The reason for doing this is two-fold.
% First, it has been proved in the control literature that minimizing (maximizing) a PWL function over a polyhedron yields PWL solutions.
% Second, the BReLU neural network can provide a more accurate approximation than the ReLU neural network. % when they have similar network structures.
Then, to model the relationships among agents in multi-intersection scenarios, we introduce an interpretable influence mechanism based on efficient hinging hyperplanes neural network (EHHNN), which derives weights by ANOVA decomposition among agents and extracts spatiotemporal dependencies of the traffic features.
% Experiments
% Finally, we conduct our proposed framework on both synthetic and real-world data on the Simulation of Urban Mobility (SUMO) environment, where it coordinates the signal control between different intersections and achieves lower and more sustainable intersection delays across entire traffic networks compared to the benchmark traffic signal methods.
% Finally, we conduct our proposed framework on two sythetic traffic networks, % on the Simulation of Urban Mobility (SUMO) environment, 
% where it coordinates the signal control between different intersections and achieves lower and more sustainable intersection delays across entire traffic networks compared to the benchmark traffic signal methods.
Finally, our proposed framework is validated on two synthetic traffic networks to coordinate signal control between intersections, achieving lower traffic delays across the entire traffic network compared to state-of-the-art (SOTA) performance.

\end{abstract}

% Note that keywords are not normally used for peerreview papers.
\begin{IEEEkeywords}
Multi-agent reinforcement learning, biased ReLU neural network, efficient hinging hyperplanes neural network, traffic signal control.
\end{IEEEkeywords}

\IEEEpeerreviewmaketitle

% 2023.10.09周讨论后论文大纲修改
% Contribution与Motivation一一对应

% Contribution
%   1. EHH神经网络具有interpretability
%   2. PWL-Actor-Critic framework
%   3. Simulation significant(待定)

% Motivation
%   To solve the traffic congestion, and model the non- traffic network in real world (most of the cuurent work foucus on modeling the traffic network as a grid, which is lack realistic and insufficiency).
%   Therefore, on the one hand, we model the multi-intersection traffic signal control problem as Graph MDP, and propose an interpretable influence mechanism using EHH neural network to extract the spatial dependeny in the dimension of feature vector.
%   On the other hand, funtion approximator plays an imporatant role in RL controller, we use the BReLU neural network to approximate the policy function and value function, and thus construct a PWL-Actor-Critic framework. This framework also conincide with the MPC lemma.

% Outline

\section{Introduction}
% Describe the problem -- traffic background
\IEEEPARstart{T}{ransportation} is the key driving force for economic and social growth and is one of the manifestations of urban competitiveness.
With the rapid development of urbanization, traffic congestion has become a major challenge for cities around the world\cite{zhao2011computational}.
The increasing number of vehicles on the roads has led to longer travel times, increased fuel consumption, and higher levels of air pollution \cite{gupta2018low}.

% Traffic signal control methods
Traffic signal control is usually regarded as the most significant and effective method for quick and safe transportation, which reduces traffic congestion in the urban network by adjusting the signal phase at intersections\cite{zhao2011computational}.
Generally, traffic signal control methods can be mainly divided into three main types, including fixed time control\cite{miller1963settings}, actuated control\cite{cools2013self} and adaptive control\cite{mannion2016experimental},\cite{haydari2020deep},\cite{abdulhai2003reinforcement}.
% Drawbacks of the tradictional traffic signal control methods
However, neither fixed time control nor actuated control methods consider long-term traffic conditions, thus cannot optimize the traffic signal phases adaptively based on real-time traffic flow. 
In contrast, adaptive control can effectively mitigate traffic congestion and enhance transportation efficiency, which is currently a research hotspot.
% However, both the fixed time control and actuated control method do not consider the long-term traffic conditions and cannot optimize the traffic signal phases adaptively based on real-time traffic flow.
% This deficiency results in a lack of ability to dynamically alleviate traffic congestion and coordinate the traffic information between neighboring intersections.
% In contrast, adaptive traffic signal control effectively mitigates traffic congestion and enhances transportation efficiency, which has essential research value and practical significance.
% Hence, adaptive traffic signal control methods are currently a research hotspot.
% Adaptive methods in TSC
With the development of artificial intelligence, data-driven control methods play an increasingly important role in intelligent transportation systems.
% The main goal of ITS is to provide safe, effective, and reliable transportation systems\cite{}.
Traffic signal control is a sequential decision problem, which can be modeled as Markov Decision Process (MDP) and solved by reinforcement learning (RL).
RL is a powerful dynamic control paradigm, making no additional assumptions on the underlying traffic transition model.
Rather than computing explicit traffic flow dynamic equations, RL learns the optimal strategy based on its experience interacting with the traffic environment.
The objective of traffic signal control is to minimize the total waiting time within the traffic network by controlling the phase of traffic signals at intersections.

% single-agent RL in TSC
Several RL methods have been applied to isolated traffic signal control \cite{genders2016using}, \cite{van2016deep}, \cite{behrendt2017deep}, significantly impacting the field of traffic signal control problem.
% multi-intersection TSC
However, in real world, traffic networks are interrelated, and controlling a specific intersection signal will inevitably affect the traffic condition of the upstream and downstream intersections, leading to the chain reaction of the surrounding intersections.
Multi-intersection traffic network is a complex and nonlinear system that can be quite challenging to model.
The complexity arises from the need to process intricate spatiotemporal traffic flow data and address cooperative problems among agents, which are difficult to solve using a centralized approach.
% MARL
% Consequently, many studies have explored the application of MARL in cooperative traffic signal control, aiming to find a balance between centralized and decentralized training models, minimize the waiting time of vehicles within the traffic network, and reach an optimum strategy for all agents.
Consequently, many studies have explored the application of multi-agent RL (MARL) in cooperative traffic signal control problems.
These studies aim to find a balance between centralized and decentralized training models, thus reaching an optimum strategy for all agents and minimizing the waiting time of vehicles within the traffic network.
% Graph RL
To explore further the spatial structural dependencies among different intersections, many researchers applied graph RL in multi-intersection traffic signal control, which uses graph neural networks to learn and exploits representations of each agent and its neighborhood in the form of node embedding.

Although the works mentioned above deal with the multi-intersection traffic signal control problem through cooperative RL, the effect of coupled agents on the global performance of the traffic signal has yet to be considered explicitly.
Hence, in this paper, we introduce a novel interpretable influence mechanism using efficient hinging hyperplanes neural networks (EHHNN) \cite{xu2020efficient}, which aims to capture the spatiotemporal dependencies of traffic information to better build the relationships among neighboring intersections.
% The EHHNN was first applied in short-term traffic flow prediction in 2022, which figured out the spatiotemporal factors influencing the traffic flow using ANOVA decomposition \cite{tao2022short}. 
Then, we propose a multi-agent actor-critic framework with a centralized critic and decentralized actors.
The main idea is to learn the critic using our interpretable influence mechanism to coordinate the interaction between different agents.
Besides, we improve the function approximator used in the deep RL, which is a piecewise linear neural network (PWLNN), named biased ReLU (BReLU) neural network. This BReLU neural network can obtain superior performance than rectified linear units (ReLU) neural network in function approximation when reasonably dividing the piecewise linear (PWL) region \cite{liang2021biased}.
We approximate both the value function and policy function with BReLU neural network, and thus construct the PWL-actor-critic framework.
This also coincides with the conclusion that minimizing PWL functions over polyhedrons yields PWL solutions.

% Contribution
The technical contribution of this paper can be summarized as follows:
% 1. a novel MARL framework (BRAC)
% 2. a novel influence mechanism using EHHNN
% 3. Experiment on Non-Eudian network, and validated effectiveness compared with baseline algorithm.
\begin{itemize}
    \item A novel MARL framework is proposed, in which the BReLU neural network is used as the approximator for both the value function and policy function to construct the PWL-actor-critic framework. 
    % Furthermore, we propose a new influence mechanism and combine it with MARL.
    \item We propose a novel influence mechanism using the EHHNN and combine it with MARL. Compared to the graph-based approaches, our proposed mechanism does not require the pre-defined adjacency matrix of the traffic network and exhibits excellent capability in capturing the spatiotemporal dependencies of traffic flow data. Instead, it models the relationships by analyzing the impact of input variables on the output variables.
    Compared to the attention mechanism, the EHHNN-based influence mechanism has fewer parameters due to the sparsity of the EHHHNN and does not include a nonlinear activation function, which better explains the contribution of input features to a specific output variable.
    To the best of our knowledge, we are the first to use EHHNN to model multi-agent relationships and introduce a novel multi-agent framework using EHHNN-based influence mechanism.
    % \item Compared with the work \cite{tao2022short}, we further explore the application of EHHNN in large-scale multi-intersection control problem. We propose an influence mechanism based on EHHNN, which can not only capture the spatiotemporal dependencies of traffic flow data, but also illustrate the relation representation among different agents, enabling multi-agent cooperative control. To the best of our knowledge, we are the first to use EHHNN to model multi-agent relationships and introduce a novel multi-agent framework using EHHNN-based influence mechanism.
    \item Experiments are conducted on both the traffic grid and a non-Euclidean traffic network. We compare the effectiveness of our solution with several state-of-the-art (SOTA) methods and further analyze the relation reasoning given by our influence mechanism.
\end{itemize}

% The paper structure
The rest of this paper is organized as follows: 
\chapref{section2} briefly introduces related work.
\chapref{section3} gives problem formulation of multi-intersection traffic signal control problem and models it as Partially-Observable MDP (POMDP).
\chapref{section4} outlines the detailed implementations of our proposed multi-agent actor-critic framework, which is based on BReLU neural network approximation and employs a novel interpretable influence mechanism to learn the spatiotemporal dependency among different agents.
% \chapref{Interpretable} propose an interpretable influence mechanism based on EHHNN.
% Then in \chapref{chapter5}, we propose an multi-agent actor-critic framework based on BReLU neural network approximation and employ the interpretable influence mechanism to learn the spatial dependency among different agents.
% \chapref{chapter5} compares our proposed method with the traditional fixed time control and the SOTA MARL algorithm in the Simulation of Urban Mobility (SUMO) simulated traffic environment.
\chapref{chapter5} demonstrate the effectiveness of the EHHNN through traffic forecasting experiments, while also comparing our proposed multi-agent actor-critic framework with traditional fixed time control and SOTA MARL algorithm in two simulated environments of multi-intersection traffic signals control task.
Finally, \chapref{chapter6} concludes the paper and outlines the main results.

\section{Related Work} \label{section2}
In this section, we briefly introduce the related work of MARL and graph neural networks in traffic signal control.
Additionally, we introduce the development of the EHHNN and its application.
\subsection{Deep Graph Convolutional RL}
In order to tackle the challenges of large-scale traffic signal control and address the issue of the curse of dimensionality in MARL, many studies have explored the application of MARL methods to traffic signal control problems, which includes approaches employing independent deep Q-network\cite{van2016coordinated}, \cite{calvo2018heterogeneous}, multi-agent deep deterministic policy gradient \cite{casas2017deep}, multi-agent advantage actor-critic \cite{chu2019multi}, and large-scale decomposition method \cite{tan2019cooperative}.
% Graph RL
However, not all data in the real world can be represented as a sequence or a grid.
To explore further the spatial structural dependencies among different intersections, many researchers applied graph RL in multi-intersection traffic signal control, which uses graph neural networks \cite{kipf2016semi}, \cite{hamilton2017inductive}, \cite{velivckovic2017graph} to learn and exploits representations of each agent and its neighborhood in the form of node embedding.
% The GNNs are widely applied in the traffic flow prediction \cite{li2018graph}, \cite{guo2019attention}, and traffic state prediction \cite{li2017diffusion}, \cite{wu2019graph}.
Then, researchers introduced structured perception and relational reasoning based on graph networks in MARL \cite{zambaldi2018relational}, \cite{battaglia2018relational}, which aims to model the relationship in multi-agent scenarios.
% Following this line, graph-based RL methods have been used for traffic signal control to attain a deeper understanding of interactions among different intersections \cite{jiang2018graph}, \cite{chen2020gama}, \cite{devailly2021ig}, \cite{yan2023graph}, \cite{zeng2021graphlight}, \cite{wang2020stmarl}.
Following this line, graph-based RL methods have been used for traffic signal control to attain a deeper understanding of interactions among different intersections, e.g., graph convolutional network and graph attention network were applied in RL to process graph-structured traffic flow data \cite{jiang2018graph}, \cite{chen2020gama}.
Several studies combined deep Q network method with graph neural networks \cite{devailly2021ig}, \cite{yan2023graph}.
In another study, researchers introduced multi-agent advantage actor-critic algorithm to graph-based RL and proposed a decentralized graph-based method \cite{zeng2021graphlight}.
Furthermore, a spatiotemporal MARL framework was proposed, which considers the spatial dependency and temporal dependency of the traffic information \cite{wang2020stmarl}.
\subsection{The EHHNN}
% In deep RL, agents typically need to handle high-dimensional and complex state spaces, making an effective function approximator crucial.
In deep RL, dealing with high-dimensional and intricate state spaces is a major challenge, so many studies use an effective function approximator to approximate the value function and policy function.
Neural networks enable the integration of perception, decision-making, and execution phases, facilitating end-to-end learning.
PWLNNs are now a successful mainstream method in deep learning, and ReLU is a commonly used activation function in PWLNNs.
In typical neural networks, it is impossible to determine the contribution of different input variables to the output through neuron connections. Therefore, in the multi-agent cooperative control tasks, researchers have introduced methods incorporating attention mechanism and relational inductive biases to determine the weights of various input features through similarity measurement.
However, we aspire to capture the influence of the inputs on outputs directly through an interpretable neural network.

In 2020, a novel neural network called EHHNN was proposed \cite{xu2020efficient}, which strikes a good balance between model flexibility and interpretability.
% The EHHNN is a kind of PWLNN derived from the hinging hyperplanes (HH) model \cite{breiman1993hinging}, and each EHHNN can be equivalently matched to an adaptive hinging hyperplanes model \cite{xu2020efficient}.
The EHHNN is a kind of PWLNN derived from the hinging hyperplanes (HH) model \cite{breiman1993hinging}, in which the ReLU neural network is a kind of the HH model \cite{liang2021biased}. When the HH model involves only one linear hyperplane passing through the origin, it becomes a ReLU neural network.
The EHHNN exhibits a high flexibility in dynamic system identification due to its PWL characteristics. Furthermore, the network is interpretable, allowing for determining the impact of the input layer and hidden layer neurons on the output through the analysis of variance (ANOVA) \cite{friedman1991multivariate}, thereby facilitating the analysis of feature variables.
In 2021, a new activation function called BReLU is proposed \cite{liang2021biased}, which is similar to the ReLU. 
The BReLU neural network is also an HH model, i.e. a PWL function. Compared with ReLU, it employs multiple bias parameters, which can partition the input spaces into numbers of linear regions, thus resulting in high flexibility and excellent approximation capabilities, especially in regression problems.

\section{MARL for Traffic Control Problem} \label{section3}
In multi-intersection traffic signal control problems, each signal controller reduces the traffic flow at the intersection by adjusting the phase to minimize the waiting time of the traffic flow in the whole traffic network.
In this section, we first model an unstructured multi-intersection traffic network as a directed graph and give the formulation and assumptions used in the optimization problem. Then, we model the multi-intersection traffic signal control as a MDP and give a detailed formulation of the state, action, and reward function.

\subsection{Traffic Signal Control Problem Formulation} \label{modeling}
We consider a more general traffic network, as shown in \figref{TFN}.
For the urban traffic network may not be necessarily connected due to the limitation of the area for urban development \cite{jiang2004structural}.
Similar to related work \cite{van2016coordinated},\cite{devailly2021ig},\cite{mousavi2017traffic}, the typical intersection shapes are included in the traffic network, as shown in \figref{TFN}, $v_3$ and $v_5$ are three-way intersections, while others are crossroads.
Besides, the length of lanes in the traffic network varies.
%  The non-Euclidean traffic network includes both three-way intersections and crossroads, with different lengths of lanes.
% We consider a traffic network that is more diverse than a traditional traffic grid, wherein the number and length of edges connecting different intersections vary.
We define intersections in the traffic network as nodes and the road between every two intersections as edges,
then the multi-intersection traffic network can be modeled as a directed graph $G(V,E,\Psi)$, where $V=\left \{ v_i \right \}_{i=1}^{\left | V \right | }$ is the nodes set, $\left | V \right | = N$, refers $N$ nodes (intersections) in the graph.
$E=\left \{e_j \right \} _{j=1}^{\left | E \right | }$ is the edges set, $\left | E \right |$ is the number of edges and there are $l_j$ lanes on the $j$-th edge.
For each edge $e_j$, there is a corresponding upstream node $u_j \in V$ and downstream node $d_j \in V$.
% $E=\left \{ (e_j, u_j, d_j) \right \} _{j=1}^{\left | E \right | }$ is the edges set, in which $e_j$ denotes the $j$th edge's observation information, $u_j$ and $d_j$ are the upstream node and downstream node of the $j$th edge, respectly. There are $l_j$ lanes on the $j$th edge, and $\left | E \right |$ is the number of edges on the whole traffic network.
$\Psi$ represents the global attribute of the traffic graph.

\begin{figure}[!htp]
    \centering
    \includegraphics[width=8cm]{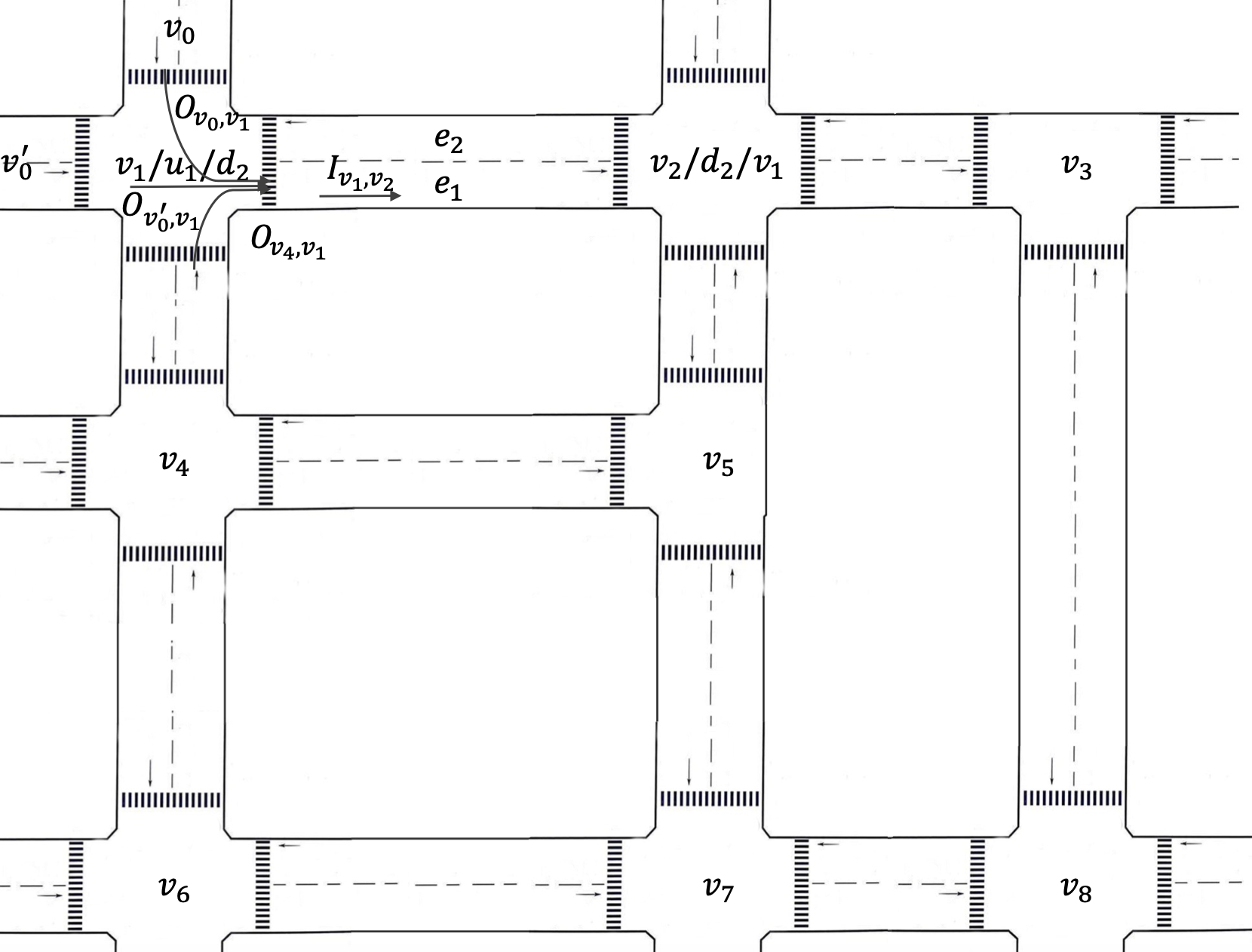}
    \caption{A non-Euclidean Traffic Network}
    \label{TFN}
\end{figure}

Before describing the traffic dynamic system, we make three reasonable assumptions.

\noindent \textbf{Assumption 1:} 
To simplify the large-scale traffic problem, we assume that the sampling time intervals $\Delta t$ of all intersections are the same, then the cycle time can be expressed as 
\begin{equation}
    T=k\cdot \Delta t
\end{equation}

\noindent \textbf{Assumption 2:} 
We assume that the vehicles from upstream node $u_j$ entering edge $e_j$ will travel freely until they reach the tail of the waiting vehicle queues. And they will be divided to join the separated queues of the downstream node $d_j$ that they intend to go to.

\noindent \textbf{Assumption 3:}
We assume that each intersection block is equipped with an individual roadside unit, which can observe and collect the traffic information and transmit it to the central controller.

Now, the dynamic traffic model can be derived as follows.
According to the vehicle conservation theorem, the number of vehicles on edge $e_j$ at step $k$ is updated by
\begin{equation}
    % n_{ud}(k) = n_{ud}(k-1) + (I_{ud}(k-1) - O_{dm}(k-1)), m\in \mathcal{N}_d
    % n_{u,d}(k+1)=n_{u,d}(k)+(I_{u,d}(k)-O_{m,d}(k)), m\in \mathcal{N}_d
    n_j(k+1) = n_j(k)+\left ( I_{u_j, d_j} -O_{d_j,v} \right ) ,v\in \mathcal{N}_{d_j}-\left \{ u_j \right \}
\end{equation}
where $I_{u_j,d_j}$ denotes the traffic flow entering edge $e_j$, while $O_{v,u_j}$ denotes the traffic flow leaving edge $e_j$ reaching its next adjacent target node $v$, $\mathcal{N}_{d_j}$ represents the adjacent nodes of the downstream node $d_j$.

And consequently, we can update the density of the $j$-th edge from node $u_j$ to node $d_j$ at time $k$ as
\begin{equation}
    % D_{j} = \min\left ( 1, \frac{n_{u,d,j(k)}\cdot \tau}{L(l_{u,d,j})} \right )
    \Phi _{u_j, d_j}=\min\left ( 1, \frac{n_j(k)\cdot \tau}{L(e_j)} \right ) 
\end{equation}
where $\tau$ is a constant representing the vehicular gap and $L(e_j)$ denotes the length of edge $j$.

The queue length $q_{u_j, d_j}(k)$ is the number of vehicles waiting on edge $e_{j}$ at step $k$, which is the number of vehicles on edge $e_{j}$ with a speed of 0. And the total queue length at intersection $i$ can be expressed as
\begin{equation}
    % Q_i(k)=\sum_{\substack{j\in l_{u,d} \\ d=i ,u\in N/\left \{ i \right \} }}q_{u,d,j}(k)
    Q_i(k)=\sum q_{u_j,v_i}(k), u_j \in \mathcal{N}_{v_i}
\end{equation}
The total vehicle waiting time at the intersection $i$ at step $k$ is denoted as $W_i(k)$.

Our objective is to minimize the total waiting queue of the global traffic network by controlling the phase of the intersection signals
\begin{equation}
    % Z^*=min \sum_{k=0}^{T/\Delta t}\sum_{j \in l_{u,d;j}} q_{u,d;j}(k)
    Z^*=\min \sum_{k=0}^{T/\Delta t} \sum_{i=1}^N Q_i(k)
\end{equation}

% 将图节点、边、全局变量的更新结合RL，放在MDP的建模里 (MDP将图上的节点的值进行更新, state, \pi对应节点函数？)
% 可以参考之前Newton啥的，那篇将MPC和RL结合起来的论文来写
\subsection{Multi-intersection Traffic Signal Control as MDP}\label{MDP}
Due to the uncertainty and dynamics of the traffic system, the multi-intersection traffic signal control problem can be abstracted as a discrete stochastic control problem, and be modeled as POMDP, defined as a tuple $\left \langle \mathcal{O, A, R}, N, \gamma \right \rangle$, where $\mathcal{O}=\left \{ o_1, o_2,\dots, o_N \right \} $ is the set of observations, $\mathcal{A}=\left \{ a_1, a_2, \dots , a_N \right \} $ is the set of actions, $\mathcal{R}=\left \{ r_1, r_2, \dots, r_N \right \} $ is the set of reward, and $N$ is the number of agents, also the number of intersections and the number of nodes in the graph $G$.

The observation of each agent $i$ is defined as the vector of neighborhood queue length $q_{u_j,v_i}$ of intersection $v_i$, phase of current intersection signal $\rho$, and road density $\Phi _{u_j,v_i}$, which can well represent the incoming traffic flow on the road and the queue numbers at the intersection. For agent $i$ with $u_j \in \mathcal{N}_{v_j}$ neighboring intersection, the local observation at step $k$ is
\begin{equation}
    % o_i(k)=[\rho _i(k),q_{u,i;j}(k),D_{u,i;j}(k)]
    o_i(k)=[\rho _i(k),q_{u_j, v_i}(k),\Phi _{u_j, v_i}(k)], u_j \in \mathcal{N}_{v_i}
\end{equation}
The joint state over the traffic network is expressed as $S=o_1 \times o_2 \times \dots o_N$.

Each intersection has a separate controller giving its signal (or action). At each time step $k$, the controller of intersection $i$ selects a discrete action, denoted as $a_i(k) \in \mathcal{A}_i$.
The joint action grows exponentially with the number of agents.
We consider only feasible sign configurations in the action set and use a four-stage green phase control.

In traffic signal control, researchers often use characteristic variables such as the total waiting time and the queue length of vehicles to define the reward function.
Many studies use expressions such as \equref{reward1} to define reward function, which uses changes in traffic characteristic variables $X_{tcv}$ between adjacent time step.
\begin{equation}
    r_{tcv}(k) = X_{tcv}(k-1) - X_{tcv}(k)
    \label{reward1}
\end{equation}
Using the expression above, agents tend to accumulate a certain amount of vehicles at the intersection and then release them to obtain larger rewards. This control strategy obviously does not conform to the real-world traffic application scenario.
% However, this control strategy obviously does not conform to the real-world traffic application scenario.
Therefore, in this paper, we propose an improved reward function based on \equref{reward1}. Considering the characteristics of large-scale traffic network and the optimization objective of traffic control, the new reward function is shown in \equref{reward2}, where $\kappa _1 $, $\kappa_2$ and $\kappa _3$ are hyperparameters of the reward function under different traffic conditions, $\Delta Q_i(k) = Q_i(k)-Q_i(k-1)$ is the changes in queue number between adjacent time step.
\begin{equation}
    r_i(k)=\left\{
      \begin{array}{llr}
        \kappa _1, & Q_i(k)=0 \\
        -W_i(k)/\kappa _2 , & \Delta Q_i(k)>0 \\
        -\kappa _3 \cdot \Delta Q_i(k), & \Delta Q_i(k) \le 0
    \end{array}
    \right.
    \label{reward2}
\end{equation}
The global reward on the whole traffic network is a linear weighted sum of reward $r_i$ for each agent
\begin{equation}
    r(k) = \sum_{i=1}^N r_i(k)
\end{equation}

\section{Multi-agent BReLU actor-critic with interpretable influence mechanism} \label{section4}
In this section, we propose a multi-agent BReLU actor-critic framework with an interpretable influence mechanism.
We introduce a novel neural network called BReLU neural network, which offers improved function approximation for RL and constructs a PWL-actor-critic framework.
% Then, we extend the PWL-actor-critic framework to the MARL algorithm and employ the interpretable influence mechanism based on EHHNN to learn the critic for each agent by selectively paying attention to information from other agents, which is a centralized training and decentralized execution method.
Then, we extend the PWL-actor-critic framework to the MARL algorithm and employ the interpretable influence mechanism based on EHHNN to capture the spatiotemporal dependencies among different agents.
We use a centralized training and decentralized execution method, where a joint value function is learned from the aggregated information and each actor learns its policy function based on the local observations.

\subsection{Overview}
The overview framework of our proposed method is shown in \figref{overview}.
We first build a graph that comprises traffic signal agents and subsequently propose a novel influence mechanism to extract the spatial dependencies from the input graph.
In detail, the observations of each agent $\left\{o_k^i\right\}_{i=1}^N$ go through a node embedding layer and the EHH-based mechanism to obtain the node embedding $V_{k}^{in}$ for the actor layer and the aggregation embedding for the critic layer, respectively.
The module inside the influence mechanism is shown on the right-hand side of \figref{overview}, the observations $\left\{o_k^i\right\}_{i=1}^N$ is fed into a linear transformation layer followed by an EHHNN to obtain the hidden variable $H_k$, then an ANOVA decomposition layer is designed for feature extraction to obtain the important coefficient $\sigma _m$. Finally, the aggregation embedding $V_{k}^{out}$ is obtained through weighted aggregation.
% The module inside the influence mechanism is shown on the right-hand side of \figref{overview}, which concludes a linear transformation module, an EHHNN module, and an ANOVA decomposition module designed for feature extraction and weighted aggregation of different input variables.
% Then, considering the interaction among different agents, the hidden variables obtained through the influence mechanism are fed into the actor-critic layers.
We approximate the policy function and value function in the actor-critic layers with a novel neural network named BReLU and thus construct a PWL-actor-critic framework.
Next, we will provide a detailed description of each module within the proposed framework.

\begin{figure*}[!htp]
    \centering
    \includegraphics[width=15cm]{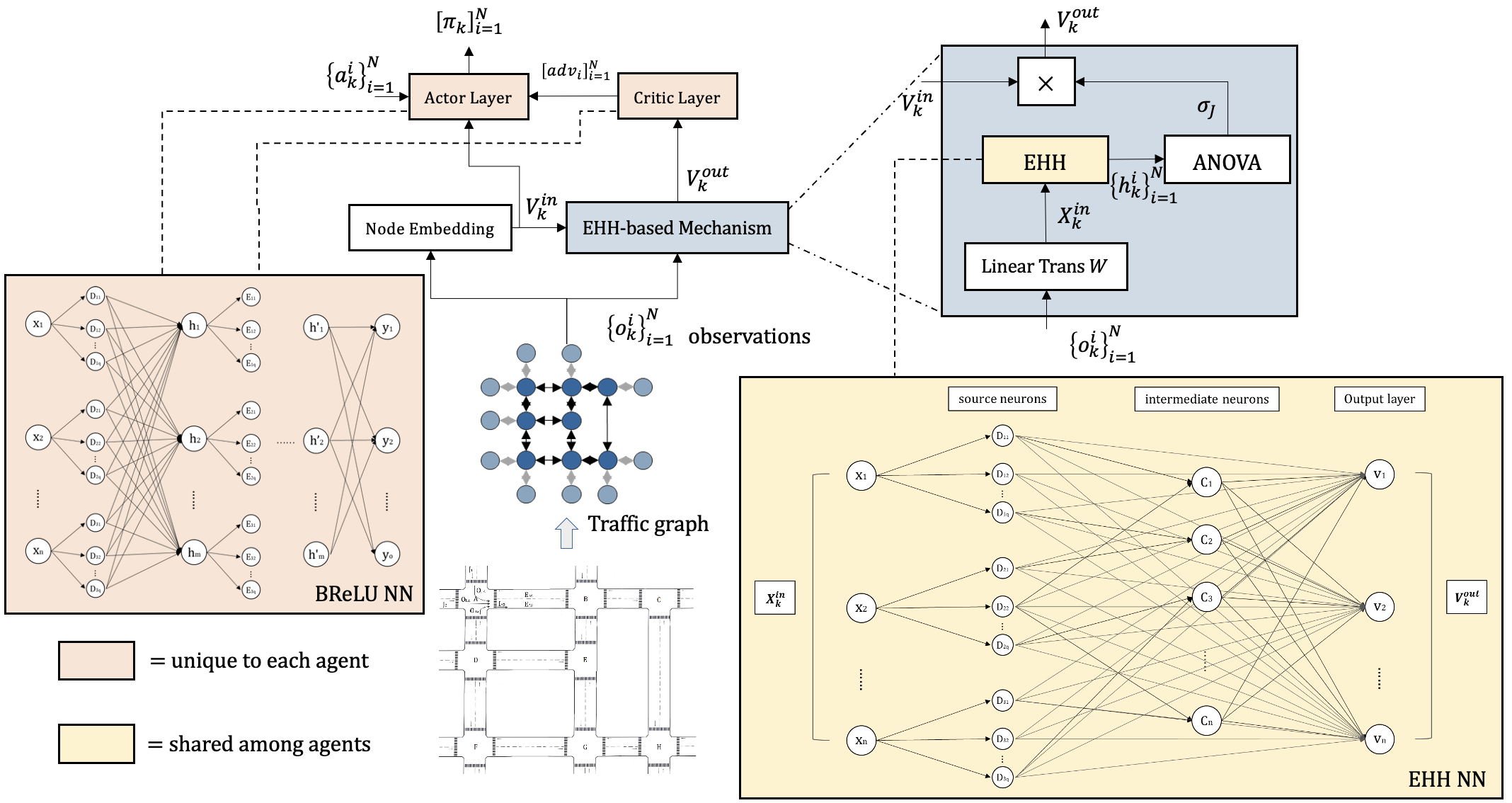}
    \caption{Overview of multi-agent BReLU actor-critic framework.}
    \label{overview}
\end{figure*}

\subsection{Node Embedding}
% 1. Normalization layer
% 2. Linear Transformation

% 1. 首先，我们通过当前的交通流局部信息获取输入特征
Firstly, we obtain the node embedding $V_k^{in}=\left \{ v_{i,k}^{in} \right \}_{i=1}^N$ at the current time step $k$ according to the neighboring edge information of intersection $v_i$, i.e. agent $i$.
% observed traffic state $\left \{ o_k^i \right \} _{i=1}^N$ within the network.
% 2. 如节3中所述，每个交通路口都可以看作是一个节点，因此，图网络中的节点和边特征可以由马尔可夫决策过程中的变量表示。
As mentioned in \chapref{section3}, each traffic intersection represents a node, and the node and edge features in the graph network can be represented by the variables in the MDP.
The traffic information collected by edge $e_j$ at time $k$ is defined the same as the value of the local observation $\left \{ o_k^i \right \} _{i=1}^N$ in the MDP
\begin{equation}
    % e_i(k)= \left [ \mathrm{id}_i(k), \left \{ q_{ui,j}(k) \right \}_{j=1}^{l_i} , \left \{ D_{ui,j}(k) \right \}_{j=1}^{l_i}   \right ]
    e_j (k) = \left [ \rho _{d_j}(k),  q_{u_j, d_j}(k),  \Phi _{u_j, d_j}(k)  \right ], 
    \label{edge_expression}
\end{equation}
where $u_j$ and $d_j$ is the upstream node and downstream node of edge $e_j$, respectly.
The traffic information of edge $e_j$ can represent the interplay among agents, which offers a more comprehensive understanding of how upstream and downstream traffic flow effects propagate between adjacent intersections.
Then, the node embedding can be expressed as the aggregation of its adjacent edges traffic information
\begin{equation}
    % v_i(k)=f^{e\to v} _{j\in \mathcal{N}_i}\left ( e_k^j \right ) 
    v_{i,k}^{in}=f^{e\to v} \left ( e_j(k) \right )=f^{e\to v} \left ( e_{u_j,v_i}(k) \right ) , u_j\in \mathcal{N}_{v_i}
    \label{node_expression}
\end{equation}
where $f^{e\to v}$ is a one-layer MultiLayer Perception (MLP) with the ReLU activation function.

\subsection{Interpretable Influence Mechanism based on EHHNN} \label{Interpretable}
% This approach enables us to identify and understand the relative significance of different factors contributing to congestion at a given intersection.

% Instead of attention mechanism, interpretably analyze each feature.

% Graph-based Representation

% 为什么用EHH来提取attention
% 1. 强可解释性
% 2. 介绍ANOVA分解

The traffic model is a nonlinear system, which is challenging to model for its complex spatiotemporal traffic flow data. 
Therefore, establishing an interpretable influence mechanism and extracting the impact of neighboring traffic information can enhance collaborative control among different intersections.
The EHHNN was first applied in short-term traffic flow prediction in 2022, which figured out the spatiotemporal factors influencing the traffic flow using ANOVA decomposition \cite{tao2022short}. 
Compared with this work, we further explore the application of EHHNN in large-scale multi-intersection control problem. We propose an influence mechanism based on EHHNN, which can not only capture the spatiotemporal dependencies of traffic flow data, but also illustrate the relation representation among different agents, enabling multi-agent cooperative control.
% To the best of our knowledge, we are the first to use EHHNN to model multi-agent relationships and introduce a novel multi-agent framework using EHHNN-based influence mechanism.
% Consequently, we introduce the influence mechanism module based on a novel neural network, named EHHNN, to model the spatiotemporal interactions among different intersections.
The structure of our proposed interpretable influence mechanism module is shown in the blue box in \figref{overview}.

Firstly, we reduce the dimension of the local observation $\left \{ o_k^i \right \} _{i=1}^N$ through a linear transformation layer $W$ to obtain the input variable $X_k^{in}$ for the EHH layer.
Unlike other neural networks, the EHHNN possesses interpretability, allowing us to extract the interactions among different input variables through ANOVA decomposition and an interaction matrix.
The hidden layer in EHHNN can be seen as a directed acyclic graph, as shown in the yellow box in \figref{overview}.
All nodes in the directed acyclic graph contribute to the output, including two types of neurons, source nodes $D$ and intermediate nodes $C$.
% \begin{figure}[!htp]
%     \centering
%     \includegraphics[width=5cm]{}
%     \caption{The hidden layer in EHH neural network}
%     \label{EHHNN}
% \end{figure}

In the EHHNN, the output of source nodes can be described as:
\begin{equation}
    % \phi _D(x)=\max \left \{ 0,x_m-\beta _{m,q_m} \right \} 
    z_{1,s} = \max \left \{ 0,x_m-\beta _{m,q_m} \right \} 
\end{equation}
where $m$ represents the dimension of the input variable, and $\beta _{m,q_m}$ represents the $q_m$-th bias parameters on the input variable $x_m$.
% $J_D=\left \{ \phi _1^D, \phi_2^D, \dots, \phi ^D _{n_D} \right \} $ denotes the set of all source nodes and $n_D$ is the number of the source nodes.

In the hidden layer of the EHHNN, the intermediate nodes are obtained by minimizing existing neurons of the previous layers, which comes from different input dimension
% from different input dimension:
\begin{equation}
    % \phi _C(x) = \mathrm{min}_{m,j\in N_d, m\ne j} \left \{ nn_{D_m}(x), nn_{D_j}(x) \right \} 
    \begin{aligned}
        % \phi _C(x) = & \min \big \{ \max\left \{ 0, x_m-\beta _{m, q_m} \right \}, \\
        % & \max\left \{ 0, x_n-\beta _{n,q_n} \right \}   \big \} , m\ne n
        % \phi _C(x) = & \min_{\phi_1^D,\dots,\phi_{n_D}^D\in J_D } \big \{ \max\left \{ 0, x_{\phi_1^D}-\beta _{1,q_1} \right \}, \\
        % & \dots, \max\left \{ 0, x_{\phi_{n_D}^D}-\beta _{{n_D},q_{n_D}} \right \}   \big \}
        % \phi _C(x) = & \min \bigg \{ \max\left \{ 0, x_{1}-\beta _{1,q_1} \right \}, \\
        % & \dots, \max\left \{ 0, x_{n_D}-\beta _{{n_D},q_{n_D}} \right \}   \bigg \} \\
        % = & \min_{\phi_1^D,\dots,\phi_{n_D}^D\in J_D } \left \{ \phi _1^D, \dots, \phi _{n_D}^D \right \} 
        z_{p,s} = & \min _{\mathrm{nn}_{s_1}, \dots ,\mathrm{nn}_{s_p} \in J_{p,s}} \bigg \{ \max \left \{ 0,x_{\mathrm{nn}_{s_1}}-\beta _{s_1} \right \}, \\        
        & \dots \max \left \{ 0,x_{\mathrm{nn}_{s_p}} - \beta _{s_p} \right \}   \bigg \} 
    \end{aligned}
\end{equation}
where we define $J_{p,s} = \left \{ \mathrm{nn}_{s_1}, \dots, \mathrm{nn}_{s_p} \right \} $ contains the indices of neurons generated by previous layers, and $\left | J_{p,s} \right | = p $, which represents the number of interacting neurons of the $p$-th layer.
% $J_C = \left \{ \phi _1^C, \dots, \phi _{n_C}^C \right \} $
% is the set of intermediate nodes, and $n_C$ is the number of intermediate nodes.
% where $J_{m,s}=\left \{ \phi_{s_1}, \dots, \phi _{s_m} \right \} $ contains the indices of the interacting variables.
% Since the EHHNN is sparse and the intermediate nodes determine the connected source nodes according to the inner product, the EHHNN represents the connections of each neuron using an adjacency matrix $A_d$.

Finally, the output of the EHHNN $H_k = \left \{ h_{i,k} \right \} _{i=1}^N$ is the weighted sum of all neurons in the hidden layer:
\begin{equation}
    % f_{\mathrm{EHH}} = \sum_{k=1}^M \alpha_k nn_{A_k}(x) + \alpha_0
    % H_k = \sum_{m=1}^M \alpha_m nn_{A_m}(X_k^{in}) + \alpha_0
    % H_k = \sum_{p=1}^{n_C+n_D} \alpha_p \cdot  \phi _{J_p}(\tilde{x}) + \alpha_0
    H_k= \alpha _0 + \sum _{s=1}^{n_1} \alpha _{1,s} z_{1,s} (\tilde{x}) + \sum _{p=2}^P \sum _{s=1}^{n_p} \alpha _{p, s} z_{p,s}(\tilde{x})
    \label{EHHNN}
\end{equation}
% in which $A_p$ are the neurons in the hidden layer, including both the source nodes and intermediate nodes, i.e. $J=J_C \cup J_D$, and $J_p$ is the $p$-th neuron in set $J$, $\alpha _p$ are the weight of the EHHNN and $\alpha_0$ is the constant bias, $\tilde{x} = X_k^{in}$.
where $\alpha_{1,s}, \alpha_{p,s}$ are the weight of the EHHNN and $\alpha_0$ is the constant bias, $\tilde{x} = X_k^{in}$, $n_1$ and $n_p$ denotes the number of neurons in the $1$-th and $p$-th layer, respectively.

In this paper, we employ a two-factor analysis of variance (ANOVA) to determine the main effect of different traffic flow information as individual factors, as well as the interaction effect of bivariate factors on intersection congestion, i.e. $P=2$ in \equref{EHHNN}. 
% Similar to the ANOVA decomposition \cite{friedman1991multivariate}, the EHHNN can be written equivalently in the form of \equref{ANOVA}, 
% where $|\mathcal{S}_{A_j}|$ denotes the number of elements in the set $\mathcal{S}_{A_j}$,
The first sum represents the influence of individual variables, the second sum represents the joint influence of two variables when $P=2$.
This characteristic of the EHHNN provides insights into how different variables contribute to the overall prediction and facilitates a deeper understanding of the underlying relationships within the data.
% \begin{equation}
% \begin{split}
    % f_{\mathrm{EHH}}
    % H_k & = \alpha _0 + \sum_i \sum_{|\mathcal{S}_{A_j}|=1} \alpha_j \mathrm{nn}_{A_j}(x_i) \\
    %                  & + \sum_{i,k} \sum_{|\mathcal{S}_{A_j}|=2} \alpha_j \mathrm{nn}_{A_j}(x_i, x_k) \\
    %                  & + \sum_{i,k,r} \sum_{|\mathcal{S}_{A_j}|=3} \alpha_j \mathrm{nn}_{A_j}(x_i, x_k, x_r) + \dots
    % H_k=\alpha _0 + \sum _{p_1=1}^{n_D} \alpha_{p_1} \cdot \phi _{p_1}^D (\tilde{x}) + \sum _{p_2=1}^{n_C} \alpha _{p_2} \cdot \phi _{p_2} ^C (\tilde{x}) + \dots
% \end{split} 
% \label{ANOVA}
% \end{equation}
Similar to the related work in \cite{friedman1991multivariate}, \cite{xu2020efficient} and \cite{tao2022short}, we can identify the hidden nodes that influence each output component in the ANOVA decomposition, which is calculated by
\begin{equation}
\begin{split}
    %  \sigma _J & = \sqrt{\mathrm{VAR}(f_J(x))} \\
    %  f_J(x) & = \sum _{(k,s):J_{k,s}=J} \alpha_{k,s}z_{k,s}(x) 
    & \sigma _{m} = \sqrt{\mathrm{VAR}(f_{m}(\tilde{x}))} \\
    & f_{m}(\tilde{x}) = \sum _{J_{p,s}=\left \{ m \right \} } \alpha_{p,s} z_{p,s}(\tilde{x}) 
\end{split}
\label{anova}
\end{equation}
where $\mathrm{VAR(\cdot)}$ denotes the corresponding variance of the prediction output related to the $m$-th component of input variable $\tilde{x}$. The larger the value of $\sigma _{m}$, the greater the impact of its corresponding input variable on the degree of congestion of the traffic network.

\begin{remark}
    Due to the physical connection between traffic data and road networks, many researchers employ relational reasoning through graph-based methods or attention mechanism.
    However, the graph-based methods require a predefined and fixed adjacency matrix to reflect the spatial dependencies of different nodes, which may not effectively capture the spatiotemporal dependencies in the dynamic traffic flow data.
    The attention mechanism computes the attention coefficient by performing the dot products for all input vectors and utilizes a nonlinear activation function, making it challenging to interpret the relationships between learned weights.
    Furthermore, most traditional neural networks lack interpretability, making it challenging to select and analyze the input variables while accurately predicting the complex spatiotemporal traffic flow.
    The EHHNN can meet the requirements above, which employs a unique network structure to capture spatiotemporal dependencies from the input data.
    It has fewer parameters due to the sparsity of the network structure and can extract the influence coefficients derived from ANOVA decomposition without knowing the node connectivity of the data, as illustrated in the framework shown in \figref{ANOVA_framework}.
    % The EHHNN can meet the requirements above, which employs a unique network structure to capture spatiotemporal dependencies from the input data.
    % Thus, it has fewer parameters and can extract the influence coefficients derived from ANOVA decomposition without knowing the node connectivity of the data, as illustrated in the framework shown in \figref{ANOVA_framework}.
\end{remark}

\begin{figure}[!htp]
    \centering
    \includegraphics[width=5CM]{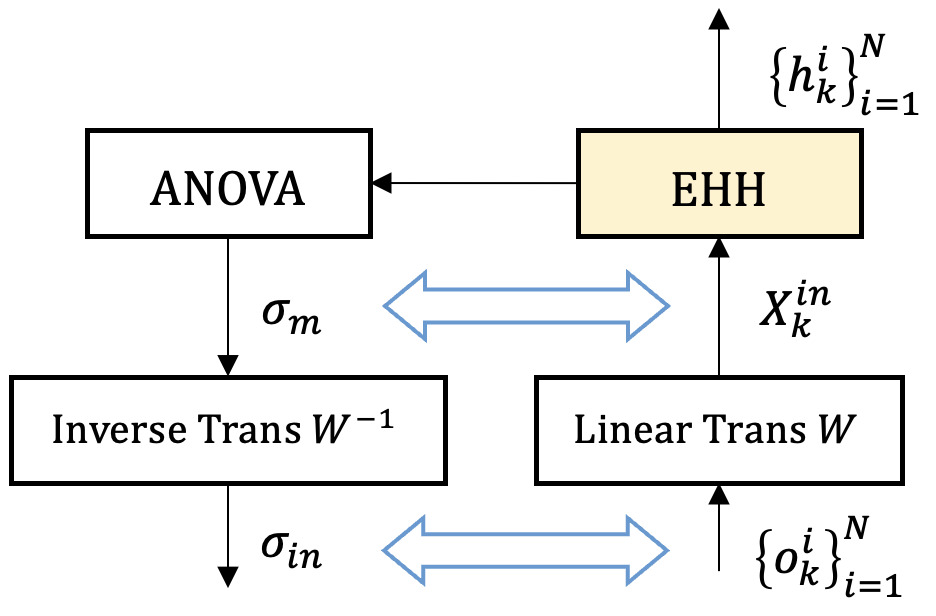}
    \caption{Structure of EHH network decomposition}
    \label{ANOVA_framework}
\end{figure}

While achieving precise prediction, it processes interpretability through ANOVA decomposition, which obtains the important coefficients of the hidden variables $\sigma_m$ as mentioned in \equref{anova}.
Additionally, since there is no nonlinear activation function after the linear transformation layer, the importance coefficients of input variables can be derived through inverse transformation:
\begin{equation}
    \sigma _{in} = W^{-1} \sigma _m
\end{equation}

Finally, we derive the aggregation embedding $V_k^{out}=\left \{ v_{i,k}^{out} \right \}_{i=1}^N $, which aggregates information from other intersections
\begin{equation}
    v_{i,k}^{out} = \sum \sigma _m v_{i,k}^{in}
    \label{node_output}
\end{equation}
where $v_{i,k}^{in}$ is the node embedding.

% This aggregated output graph $V_{k}^{out}$ is used as the input to the critic, enabling the learning of a joint value function, while each actor learn its policy function based on the node embedding vector $V_{k}^{in}$.
This output graph $V_{k}^{out}$, which extracts the spatiotemporal features through ANOVA decomposition of EHHNN, is used as the input of the centralized critic to learn a joint value function and faciliate collaboration among different agents.
While in decentralized execution, each actor learns its policy function based on the locally observed embedding vector $V_{k}^{in}$.

\subsection{Actor-Critic Framework based on BReLU neural network approximation}
% Actor-critic and MPC share a similar framework, with MPC being widely recognized for its well-established theoretical research.
% The RL optimization problem can be equivalently reformulated as an MPC problem. Consider an infinite horizon MPC problem as shown in \equref{MPC}, and \equref{AC} gives the equivalent form in the actor-critic framework.
% Actor-critic and model predictive control (MPC) share a similar framework, with MPC being widely recognized for its well-established theoretical research.
The RL optimization problem is to obtain the optimal strategy $\mu ^*=\left\{u_0, u_1, \dots \right\}$ that satisfies the constraints, while maximizing the total reward, which can be written as
% \begin{equation}
%     \begin{aligned}
%         & \mu ^* = \arg \max E \left \{ g(x,u,w)+\gamma \tilde{V}(f(x,u, w)) \right \} \\
%         & u \in \pi(x)
%     \end{aligned}
%     \label{RL_opt}
% \end{equation}
\begin{equation}
    \begin{aligned}
        \tilde{V}_{k+1}(x) = \max _{u\in \pi (x)} E \left \{ g(x,u,w)+\gamma \tilde{V}_k(f(x,u, w)) \right \}
    \end{aligned}
    \label{RL_opt}
\end{equation}
% The RL optimization problem 
% \begin{equation}
%     V_{k+1} = \min _{u\in U(x)} E \left\{ g(x,u,w) + \alpha V_k (f(x,u,w))\right\}
% \end{equation}
where $x$ denotes the state, $w$ is a random disturbance with a probability distribution $P(\cdot|x, u)$, $\tilde{V}$ is the value function, $\pi$ denotes the policy function, $g(x,u,w)$ is the cost per step, $\gamma$ is the discount factor.

\begin{lemma} \label{lemma1}
    When the value function $\tilde{V}$ is a PWL function, the policy function $\pi$ is also a PWL function.
\end{lemma}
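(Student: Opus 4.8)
The plan is to read Lemma~\ref{lemma1} as a statement about parametric optimization: the policy is the maximizer of the right-hand side of \equref{RL_opt}, so I would first show that the objective being maximized is jointly PWL in the state--action pair $(x,u)$, and then invoke the principle already quoted in the introduction, namely that minimizing (equivalently maximizing) a PWL function over a polyhedron yields a PWL optimal solution as a function of the problem's parameters. Here the parameter is the state $x$ and the optimizer is exactly the policy $\pi$.

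First I would assemble the closure properties of PWL functions that make the objective PWL. Writing $Q(x,u) := E\{ g(x,u,w) + \gamma \tilde{V}_k(f(x,u,w)) \}$, I would use that the class of continuous PWL functions is closed under finite $\max$ and $\min$, addition, nonnegative scalar multiplication, and composition with affine maps. Under the modeling assumptions of \chapref{section3} the per-step cost $g$ is PWL and the dynamics $f(\cdot,\cdot,w)$ is affine in $(x,u)$ for each realization of the disturbance; hence the composition $\tilde{V}_k \circ f$ is PWL in $(x,u)$ because $\tilde{V}_k$ is assumed PWL. Scaling by $\gamma$ and adding $g$ preserves this structure. Finally, if $w$ is supported on finitely many values (or the expectation is taken as a finite weighted sum, as in the sampled value iteration actually implemented), then $Q$ is a finite convex combination of PWL functions and is therefore jointly PWL in $(x,u)$.

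With $Q$ jointly PWL, I would partition the joint $(x,u)$-space into finitely many polyhedral cells on each of which $Q$ is affine. For fixed $x$ the problem $\max_{u\in\pi(x)} Q(x,u)$ is then a multiparametric linear program with $x$ as the parameter vector over a polyhedral feasible set; by the explicit-solution theory for such programs the optimal value is PWL (indeed the partial maximum of a jointly PWL function is PWL in the remaining variable) and, more to the point, there is a polyhedral subdivision of the state space on whose cells an optimal control $u^\star(x)$ is affine in $x$. This affine-on-polyhedra solution map is precisely a PWL policy, which establishes the lemma.

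I expect the main obstacle to be the solution-map step rather than the value-function step. Showing that the optimal \emph{value} is PWL follows almost immediately from partial maximization of a jointly PWL function, but showing that a \emph{single-valued} optimizer can be chosen PWL requires handling degeneracy and ties, where the argmax is set-valued; the fix is to impose a consistent tie-breaking selection (for instance lexicographic, or a Bland-type rule on the underlying linear pieces) so that the selected optimizer is affine on each polyhedral region and continuous across shared faces. A secondary technical point is justifying the expectation step when the disturbance $w$ is not finitely supported, which I would address by restricting to the finite-sample Bellman operator used in practice, so that the expectation reduces to a finite sum and the PWL structure is preserved throughout.
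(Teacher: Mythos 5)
Your proposal follows essentially the same route as the paper's own proof: both argue that the Bellman objective $g(x,u,w)+\gamma\,\tilde{V}(f(x,u,w))$ is PWL in $(x,u)$ and then invoke the explicit parametric-programming result (the paper cites Bemporad's 2002 explicit MPC theorem) that maximizing a PWL cost over a polyhedron yields a PWL optimizer, i.e.\ a PWL policy. If anything, your version is more careful than the paper's, which silently treats $\tilde{V}\circ f$ and the expectation over $w$ as PWL-preserving and never discusses selection from a set-valued argmax, whereas you make these hypotheses explicit (affine dynamics per disturbance realization, finitely supported $w$, and a consistent tie-breaking rule).
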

\begin{proof}

    % The optimal policy can also be written as
    % \begin{equation}
    %     \mu ^* = \arg \max \tilde{Q}(f(x,u,w))
    % \end{equation}
    % where $\tilde{Q}(f(x,u,w))$ is the action-state value function.

    % And the relationship between the state-value function and the action-state value function can be expressed as
    % \begin{equation}
    %     \tilde{V}(f(x,u,w)) = \sum _{u\in \pi (x)} \pi (u|x) \tilde{Q}(f(x,u,w))
    % \end{equation}

    According to the Bellman's equation, we have
    \begin{equation}
        \begin{aligned}
            \tilde{V}^*(x) = \max _{u\in \pi^* (x)} E \left \{ g(x,u,w)+\gamma \tilde{V}^*(f(x,u, w)) \right \}
        \end{aligned}
        % \label{RL_opt}
    \end{equation}
    this equation can be view as the limit as $k \to \infty$ of \equref{RL_opt}. Then the optimal strategy can be derived as
    \begin{equation}
        \begin{aligned}
            & \mu ^* = \arg \max E \left \{ g(x,u,w)+\gamma \tilde{V}^*(f(x,u, w)) \right \} \\
            & u \in \pi(x)
        \end{aligned}
        \label{RL_opt2}
    \end{equation}
    The cost function in \equref{RL_opt2} is $g(x,u,w) + \gamma \tilde{V} (f(x,u, w))$, while $g(x,u,w)$ is aslo known as the reward function in RL.
    In this paper, the reward function as shown in \equref{reward2} is a PWL function with respect to state $x$.
    If the value function $\tilde{V}$ is a PWL function, the cost function is the sum of two PWL functions, which is also a PWL function.
    It has been proved by bemporad in 2002 \cite{bemporad2002explicit} that minimizing or maximizing a PWL cost function $\tilde{V}$ over a polyhedron yields a PWL solution $\pi$.
    Therefore, if we approximate the value function $\tilde{V}$ using a PWLNN, the policy function $\pi$ should also be a PWL solution, as stated in the conclusion above.
\end{proof}

% Therefore, we utilize a novel PWL activation function named BReLU and establish a PWL-actor-critic framework.

To align with the conclusion obtained from Lemma \ref{lemma1}, we employ PWL neural networks to approximate the value function and policy function in actor-critic.
The activation function plays an important role in neural network approximation, in which the ReLU activation function is prevalent in neural networks due to its simplicity and computation efficiency, and it can be defined as:
\begin{equation}
    z(x)=\mathrm{max}\left \{ 0,x\right \}
\end{equation}
where $x=[x_1,x_2,\dots,x_n]^T \in \mathbb{R}$.
The prevalent ReLU network is a kind of HH neural network \cite{liang2021biased}.

Based on ReLU neural network, in 2021, Liang and Xu \cite{liang2021biased} proposed BReLU neural network, which can be expressed as:
\begin{equation}
    z(x)=\mathrm{max}\left \{ 0,x_i-\beta_{i_1}  \right \} ,\dots,\mathrm{max}\left \{ 0,x_i-\beta_{i_{q_i}}  \right \}
    \label{neurons}
\end{equation}
where $q$ represents the number of linear regions in each dimension of the input data.

Different from ReLU neuron, BReLU uses different bias parameters $\beta_{i_{q_i}}$ for variables in various dimensions, thereby partitioning the input spaces into a number of linear regions and take advantage of the characteristics of PWL functions.
The input variables after normalization follow an approximately normal distribution, and the multiple bias parameters $\beta_{i_{q_i}}$ are determined by the distribution of the input data:
\begin{equation}
    \beta _{i_{q_i}}=[-3\nu  ,-0.824\nu  ,-0.248\nu  ,0.248\nu  ,0.834\nu ]+\eta
\end{equation}
where the parameter $\nu , \eta$ represent the variance and expectation of input variable after normalization, respectly. $q_i$ is the number of bias parameters of the $i$-th layer and also represents the number of linear sub-regions.
The selection of the value $q_i$ is a trade-off between network accuracy and the number of parameters.
% The corresponding quantiles $q$ can be obtained by equal area numerically integrating the normal distribution function. The selection of the value $q_i$ is a trade-off between network accuracy and the number of parameters.
And the weights in BReLU neural network are obtained using the backpropagation method.

The BReLU neural network demonstrates higher flexibility and better approximation capabilities by effectively dividing the input spaces into numbers of linear regions, even when the output increases exponentially with the input.
Therefore, we approximate the value function $\tilde{V}$ and policy function $\pi$ in the actor-critic framework using BReLU neural network, where the functions approximated by BReLU neural network are PWL.
The reason for doing this is two-fold.
First, it coincides with the conclusion that minimizing (maximizing) PWL functions over polyhedron yields PWL solutions.
Second, the approximation of the value function and policy functions using BReLU neural network provides a more precise approximation than that of ReLU.
The red box in \figref{overview} shows the structure of the BReLU neural network used in the proposed PWL-actor-critic framework.

% linear sum of BReLU (multi-agent PWL)
Different from independent proximal policy optimization (IPPO)\cite{guo2020joint}, which lacks collaboration among agents, in our proposed multi-agent actor-critic algorithm, all critics are updated together to minimize a joint regression loss function as shown in \equref{critic}, where the joint value function remains a PWL function. And $\hat{A}_i$ represents the advantage function, which is estimated by the truncated version of the generalized advantage estimation \cite{schulman2015high}, $N_{b}$ denotes the training batch size, $b'$ is an arbitrary sampling sequence after sample $b$, $\gamma$ is the discount factor, $\lambda $ is the regularization parameters and $W$ is the weight of the neural network that approximates the value function $\tilde{V}$.
% \begin{equation}
%     \begin{aligned}
%         L(\phi) & = - \sum_{j=b}^{N_{b}} \sum_i^N(\hat{A}_i(j))^2 + \varepsilon \left \| W \right \| _1\\
%         \hat{A}_i(b) & =\sum_{l=0}^{N_{b}-b-1} \gamma^l r_i(b+l) + \gamma^{N_{b}-b}V_i(v_{N_b,i}^{out}) - V_i(v_{b,i}^{out}) 
%     \end{aligned}
%     \label{critic}
% \end{equation}

\begin{equation}
    \begin{aligned}
        L(\phi) & = - \sum_{i=1}^N \sum_{b=1}^{N_{b}} (\hat{A}_i(b))^2 + \lambda  \left \| W \right \| _1\\
        \hat{A}_i(b) & =\sum_{b'>b} \gamma ^{b'-b}r_i(b')-\tilde{V}(v_{i,b}^{out})
    \end{aligned}
    \label{critic}
\end{equation}

% The individual policies for $N$ agents are updated using proximal policy optimization method:
Each decentralized actor updates its policy function only based on its local observations, which is updated using the proximal policy optimization method. And a clip function is used to limit the range of changes in the probability ratio $r_t(\theta,b)$ of old and new strategies $\pi _{i, \theta ^{-}}, \pi _{i, \theta }$ to avoid large variance changes and unstable training.
% \begin{equation}
%     \begin{aligned}
%         L_i(\theta) & = \frac{1}{|N_{b}|} \sum_{j=b}^{b+N_{b} -1} \min \left ( r_t(\theta) \hat{A}_i(a_j|v_{j,i}^{in}),g(\varepsilon,\hat{A}_i(a_j|v_{j,i}^{in}) \right ) \\
%         r_t(\theta) & = \frac{\pi_{\theta}(a_j|v_{j,i}^{in})}{\pi_{\theta ^-}(a_j|v_{j,i}^{in})} 
%     \end{aligned}
% \label{actor}
% \end{equation}
\begin{equation}
    \begin{aligned}
        L_i(\theta) & = \sum_{b=1}^{N_{b}} \min \left ( r_t(\theta, b), \mathrm{clip} \left ( r_t(\theta,b), 1-\epsilon, 1+\epsilon \right ) \right ) \cdot \hat{A}_i(b)\\
        r_t(\theta,b) & = \frac{\pi_{i, \theta}(a_b|v_{i,b}^{in})}{\pi_{i, \theta ^-}(a_b|v_{i,b}^{in})} 
    \end{aligned}
\label{actor}
\end{equation}
where

To give an overview of the proposed multi-agent algorithm, we now summarize it in \algoref{BRGEHH} and briefly introduce the training process.
During a total of $N_{ep}$ episodes, each lasting for a duration of $T$ in the training process, we store the transition $\left\{(o_i(k), a_i(k), o_i(k+1), r_i(k))\right\}_{i=1}^N$ into the memory buffer $\mathcal{M}$.
When the recorded data exceeds the batch size $N_b$, we use the pre-trained EHHNN to compute the importance coefficients of the input features and update both the actor and critic network with constant learning rate.

\begin{algorithm}[h] \small
	\renewcommand{\algorithmicrequire}{\textbf{Input:}}
	\renewcommand{\algorithmicensure}{\textbf{Output:}}
	\caption{Multi-agent BReLU actor-critic framework} 
    \label{BRGEHH}
	\begin{algorithmic}[1] 
		\Require
        A pre-trained EHHNN
		% $\alpha_{\mathrm{C}}, \alpha_{\mathrm{A}}, \gamma, \varepsilon ,\mathcal{D}, n_\mathcal{D}, E, N, T, T_s,b_s$
		\Ensure 
		actor $\theta_i$ for $i \in N$, critic $\phi$
		\State Initialization: actor $\theta _i \gets 0$ and critic $\phi _i \gets 0$ for $i \in N$;
            \For{$ep=1,\dots,N_{ep}$}
                \For{$k=1,\dots,T/\Delta t$}
                \State Reset the environment, $o_i(k), \mathcal{M}=\emptyset $
                \State Sample $a_i(k)$ from $\pi_i(k)$
                \State  receive $r_i(k)$ (\ref{reward2}) and $o_i(k+1)$
            \EndFor
            \State Store transitions $\mathcal{M} \gets \mathcal{M} \cup [o_i(k),a_i(k),r_i(k)]_{i=1}^N$
            \If{$N(\mathcal{M})>N_b$}
                \State Compute $\sigma$ using the pre-trained EHHNN (\ref{anova})
                \State Compute the aggregated result $v_{i,k}^{out}$ using (\ref{node_output})
                \For{$i=1,\dots, N$}
                \State Calculate the advantage function $\hat{A}_i(b)$ (\ref{actor})
                \State Update the actor $\theta_i$ by minimizing $L_i(\theta)$ (\ref{actor})
                \EndFor
                \State Update the global critic $\phi$ by minimizing $L(\phi)$ (\ref{critic})
            \EndIf
            \EndFor
	\end{algorithmic} 
\end{algorithm}

\section{Simulation Results} \label{chapter5}
In this section, we evaluate our proposed method using the SUMO traffic simulator \cite{behrisch2011sumo}.
Firstly, we employ
% simulated traffic flow data and 
traffic data of Los Angeles country (METR-LA) dataset \cite{li2017diffusion} for traffic forecasting, comparing existing attention mechanism and analyzing the effectiveness of the proposed interpretable influence mechanism based on EHHNN.
Compared with the work in \cite{tao2022short}, we conduct the result on a different dataset and validate the performance of EHHNN on a much larger traffic network.
Subsequently, we conduct both quantitative and qualitative experiments for multi-intersection traffic signal control on two synthetic traffic networks, and compare it to the traditional fixed time control method and the SOTA MARL controllers.

\subsection{Traffic forecasting}
% 1. Describe the dataset, using only 15 nodes of the METR-LA DATASET.
% 2. Describe the compared baseline, 
% 3.
% 4.

\subsubsection{Dataset Description}
To validate the effectiveness and interpretability of the proposed influence mechanism, we conduct experiments on the METR-LA dataset
.
The METR-LA dataset consists of traffic data collected from circular detectors on highways in Los Angeles Country, comprising a total of 207 sensors.
For our experiments, we selecte 15 adjacent nodes, as illustrated in \figref{METR}.
We evaluate the forecasting performance of the EHHNN compared with four baseline neural networks.
Furthermore, we conducte the interpretability analysis of the EHH network to facilitate a deeper understanding of the underlying relationships between different nodes.

\begin{figure}[!htp]
    \centering
    \includegraphics[width=7cm]{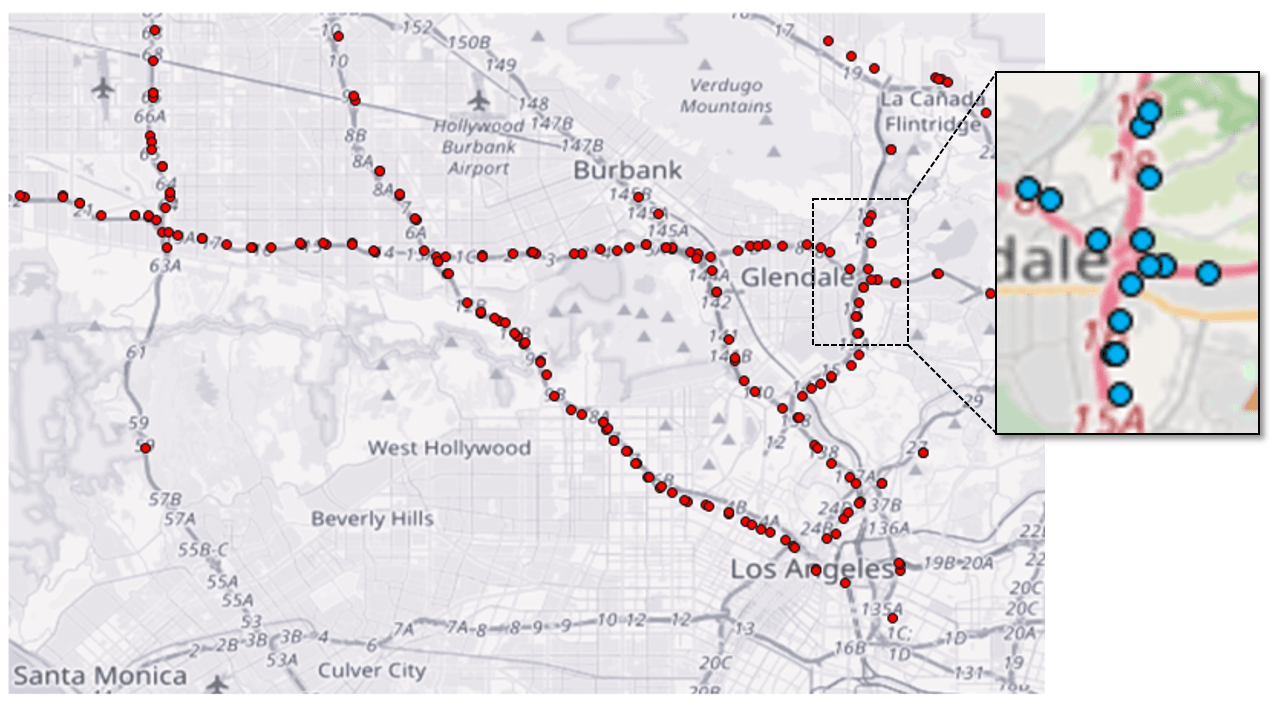}
    \caption{The METR-LA traffic network}
    \label{METR}
\end{figure}

For the traffic data in METR-LA dataset, information is collected every 5 minutes, with an observation window of 60 minutes and a maximum prediction horizon of 45 minutes.
We split the dataset into three distinct sets, including a training set, a validation set and a test set, with the training set accounting for 60\% of the total samples, and the remaining 40\% each allocated to the validation and test sets.

\subsubsection{Measures of effectiveness}
To evaluate the accuracy of different forecasting models, we employ three measures of effectiveness,
including mean absolute error (MAE), $R^2$ value\cite{niri2021machine} and root mean square error (RMSE):
% mean squared error (MSE)
% \begin{equation}
%     MSE = \frac{1}{Nn}\sum_{k=1}^{N}\sum_{t=1}^{n} (v_{t,k}-\hat{v}_{t,k} )^2
% \end{equation}
\begin{equation*}
    \mathrm{MAE}=\frac{1}{N\cdot T} \sum_{i=1}^N \sum_{t=1}^T \left | (x_{i,t}-\hat{x}_{i,t} ) \right | 
\end{equation*}
\begin{equation*}
    \mathrm{R}^2=1-\sum _{i=1}^N \sum _{t=1}^T (x_{i,t}-\hat{x}_{i,t})^2 / \sum _{i=1}^N \sum _{t=1}^T (x_{i,t}-\bar{x} )^2
\end{equation*}
% mean absolute percentage error (MAPE),
% \begin{equation}
%     MAPE = \frac{1}{Nn}\sum_{k=1}^N \sum_{t=1}^n\frac{\left | (v_{t,k}-\hat{v}_{t,k} ) \right | }{\left | (v_{t,k}) \right | }  \times 100 \%
% \end{equation}
\begin{equation}
    \mathrm{RMSE} = \sqrt{\frac{1}{N\cdot T}\sum _{i=1}^N \sum_{t=1}^T(x_{i,t}-\hat{x}_{i,t} )^2} 
\end{equation}
where $N$ denotes the number of predicted nodes of METR-LA dataset, $T$ denotes the historical time window for prediction, $x_{i,t}$ is the input traffic flow data of the $i$-th node at time $t$, and $\bar{x}$ represents the mean value of the input traffic flow data $x_{i,t}$.

% To validate the interpretability of the proposed influence mechanism, we conduct experiments on two traffic datasets for traffic forecasting, including the SUMO simulated traffic data and the traffic data of Los Angeles country\cite{li2017diffusion}.

% Firstly, we construct a 3$\times$3 synthetic traffic grid, as shown in \figref{grid3x3}, and generate traffic data under the fixed-time control condition.
% We use the proposed influence mechanism to predict the traffic congestion at intersections, and the influence of different road traffic information on each intersection can be obtained through the interpretability of EHH neural network.
% To demonstrate the effectiveness of our proposed model in capturing dependencies between different intersections and the impact of road traffic features on different intersections, we visualized the collaborative graph obtained by ANOVA decomposition.

% \begin{figure}[!htp]
%     \centering
%     \includegraphics[width=3cm]{}
%     \caption{3 $\times$ 3 traffic grid}
%     \label{fig:enter-label}
% \end{figure}

% Training: 0.6, Validation: 0.2, Test: 0.2

\subsubsection{Experimental Settings}
All experiments are compiled and tested on Linux cluster (CPU: Intel(R) Xeon(R) Silver 4216 CPU @ 2.10GHz GPU:NVIDIA GeForce RTX 3090).
The EHHNN is a single hidden layer neural network.
In the traffic forecasting experiments of this section, our model initially reduces the input dimension through a linear transformation layer.
Then, the hidden variables after dimension reduction are fed into the EHHNN, which outputs the prediction results.
Our proposed model is effectively a two-layer shallow neural network.
Therefore, we compare it with three other shallow networks, including a two-layer fully connected (FC) neural network, a FC long short-term memory (FC-LSTM) neural network, and graph attention neural network (GAT).
Besides, we compared the EHHNN with a SOTA method in traffic forecasting, which is a deep network called Spatial-Temporal Graph Convolutional Network (STGCN).
All models utilized the same training parameters, undergoing 300 epochs of training on the training set, employing a variable learning rate optimizer.
Subsequently, the optimal parameters of the model were determined through performance evaluation on the validation set.
Finally, the models were evaluated on the test set.
All tests were conducted with a historical time window of 60 minutes, i.e. 12 sampling points, for predicting the traffic condition in the subsequent 15, 30, and 45 minutes.

\subsubsection{Experiment Results}
Table \ref{METR-result} demonstrates the results of EHHNN and baselines on the dataset METR-LA.
Compared with the three shallow networks, the EHHNN achieved the best performance.
In comparison with the STGCN deep network, our model outperformed in both $\mathrm{RMSE}$ and $\mathrm{R}^2$ metrics, indicating better fitting to large errors and overall predictions that better align with the actual trends in values.

\begin{table*}[!htp]
    \centering
    % \scriptsize
    \caption{Comparison of performance on METR-LA}
    \label{METR-result}
    \label{Comparison of the performance on METR-LA dataset}
    \begin{tabular}{c|c|c|c|c|c}
        \hline
        \multirow{2}{*}{Model}
            & \multicolumn{5}{c}{(15/30/60 min)}\\
        \cline{2-6}
        & MAE $\downarrow$ & R$^2$ $\downarrow$ & RMSE $\uparrow$ &  Params & Neurons\\
        \hline
        2-layers FC & 8.26949/8.42299/8.70810 & 0.08866/0.07988/0.06103 & 19.86146/19.96055/20.17158 & 26029/28954/34804 & 64\\ 
        FC-LSTM & 4.01271/4.63315/5.66560 & 0.77434/0.72570/0.64237 & 9.78062/10.85584/12.69706 & 59181/93786/162996 & 64\\
        GAT & 5.30224/6.01932/6.77592 & 0.79210/0.71892/0.63816 & 9.48623/11.03227/12.52191 & 14339/15878/18956 & 64\\
        STGCN & \textbf{2.83072}/ \textbf{3.61224}/\textbf{4.60460} & 0.83381/0.74521/0.64770 & 8.48147/10.50363/12.35573 & 96767/97154/97928 & 960\\
        EHHNN & 3.35360/4.21736/5.24809 & \textbf{0.84352}/\textbf{0.77446}/\textbf{0.66847} & \textbf{8.22992}/\textbf{9.88232}/\textbf{11.98600} & 42763/62293/101353 & 370 \\
    \hline    
    \end{tabular}
\end{table*}

Besides, due to the sparse connectivity and the neurons-connected structure of the EHHNN, it can achieve superior prediction accuracy with fewer training parameters.
The number of hidden layer neurons significantly impacts the performance and capability of a neural network.
Increasing the number of hidden layer neurons enhances the expressive ability and complexity of the neural network.
However, in a FC network, adding neurons implies increasing training parameters, leading to longer training times and heightened demands on computational resources.
Compared with the FC network, FC-LSTM network, the EHHNN requires fewer training parameters when having the same number of hidden neurons.
% , approximately 1/4.7 of the parameters needed for a FC neural network.
% \subsubsection{Interpretability Analysis}
% Due to the physical connection between traffic data and road networks, many researchers model the traffic data as graph networks, and use graph neural networks for prediction.
% However, the graph-based methods require a predefined and fixed adjacency matrix to reflect the spatial dependencies of different nodes, which may not effectively capture the spatial dependencies in the dynamic traffic flow data.
% Most traditional neural networks lack interpretability, making it challenging to select and analyze the input variables while accurately predicting the complex spatiotemporal traffic flow.
% The EHH neural network can meet the requirements above, as illustrated in the framework shown in \figref{ANOVA_framework}.
% \begin{figure}[!htp]
%     \centering
%     \includegraphics[width=5CM]{}
%     \caption{Structure of EHH network decomposition}
%     \label{ANOVA_framework}
% \end{figure}
% While achieving precise prediction, it processes interpretability through ANOVA decomposition, which obtains the important coefficients of the hidden variables $\sigma_h$ as mentioned in \equref{anova}.
% Additionally, since there is no nonlinear activation function after the linear transformation layer, the importance coefficients of input variables $\sigma_v$ can be derived through inverse transformation:
% \begin{equation}
%     \sigma _v = W^{-1} \sigma _h
% \end{equation}
\subsection{Multi-intersection traffic signal control}
% Baseline:
% 2. Independent PPO
% 3. GPPO: Attention mechanism + PPO + graph-based
% 4. DGN: Q-Learning + Graph + Attention mechanism
% 5. 待加：IDQN? MADDPG?

After evaluating the effectiveness and interpretability of the EHHNN in traffic forecasting, in this section, we apply the proposed influence mechanism based on EHHNN to the multi-intersection traffic signal control problem, and evaluate the algorithm on two different synthetic traffic networks.
% both synthetic and real traffic scenarios.

\subsubsection{Traffic Signal Control using Synthetic Traffic Networks}
% 1. Describe the traffic network
We evaluate our proposed method on two synthetic traffic networks, including a 5$\times$5 traffic grid and a non-Euclidean traffic network, as shown in \figref{Synthetic_traffic_network}.
The details are introduced as follows:
\begin{itemize}
    \item Network$_{5\times 5}$: A $5\times 5$ traffic grid with three bidirectional lanes in four directions at each intersection. The road length is 100m, and the lane speed limit is 13.89m/s. There are approximately 930 vehicles generated and added to the network per episode.
    \item Network$_{NonE}$: An non-Euclidean traffic network with 8 intersections. The intersections are not connected in a grid-like pattern, and the length of roads varies between 75m to 150m. There are 10 external road inputs, and approximately 250 vehicles are generated and added to the network per episode. We designed this network for the various buliding areas of the cities, as described in Subsection \ref*{modeling}.
\end{itemize}

\begin{figure}[!htp]
    \centering
    \subfigure[5$\times$5 traffic grid]{
        \includegraphics[width=4cm]{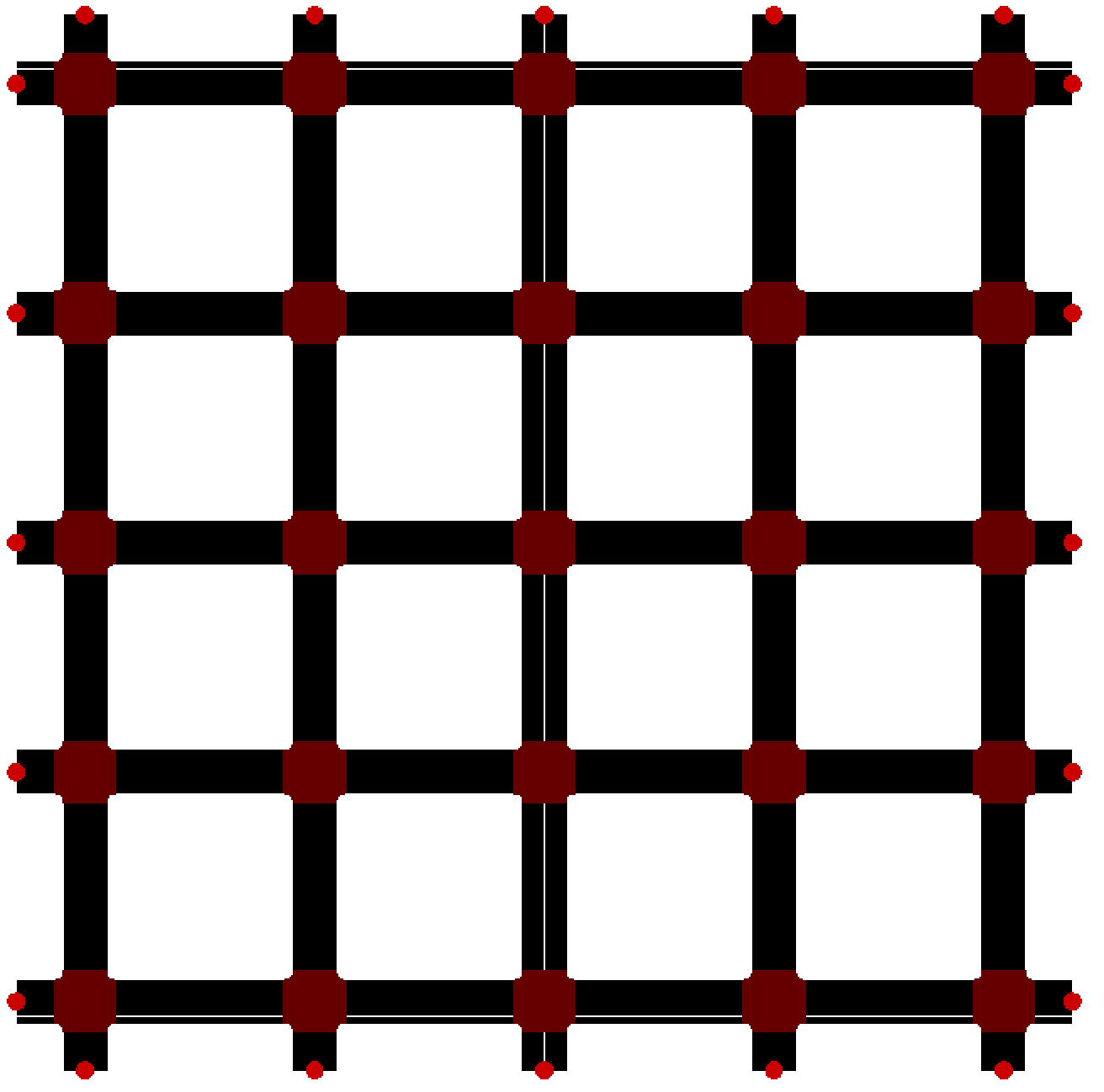}
        \label{grid_network}
    }
    \subfigure[Non-Euclidean traffic network]{
        \includegraphics[width=4cm]{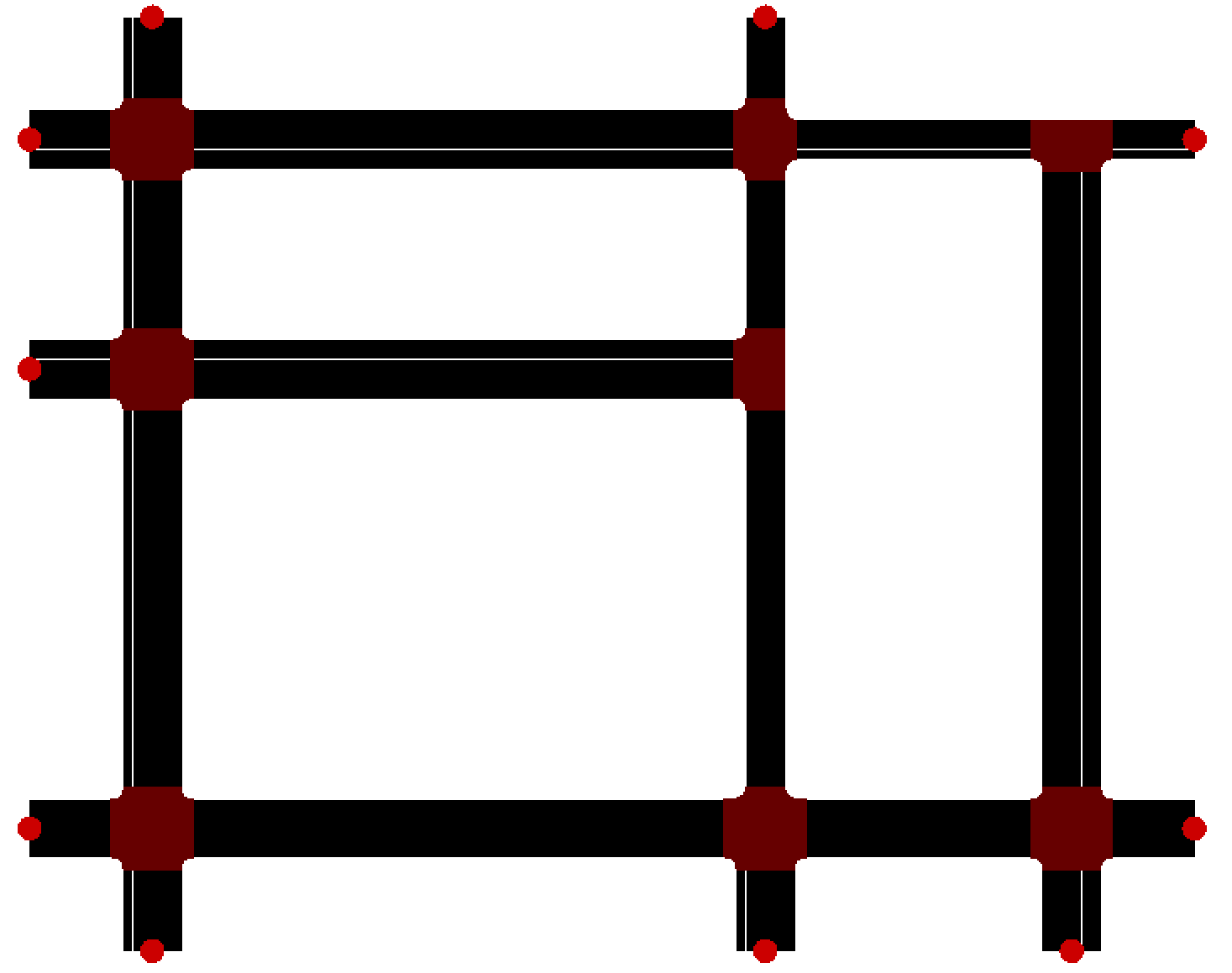}
    }
    \caption{Synthetic Traffic Networks}
    \label{Synthetic_traffic_network}
\end{figure}

\subsubsection{Evaluation Metric}
To quantitatively measure the condition of the traffic network, we define two metrics, as shown in \equref{ave_waiting} and \equref{stability}. The average waiting time (AVE) is used to measure the overall congestion degree of the traffic network, while the traffic flow stability (STA) indicates the frequency of local vehicle accumulation that appears during the simulation time.
\begin{equation}
    \mathrm{AVE} = \frac{1}{T_s} \sum_{t=0}^{T_s} \sum_{i=1}^N W_i(t)
    \label{ave_waiting}
\end{equation}
\begin{equation}
    \mathrm{STA} = \frac{1}{T_s}\sum_{t=0}^{T_s}\left [ \sum_{i=1}^N W_i(t)-E \right ] ^2 
    \label{stability}
\end{equation}
where $T_s$ denotes the simulation time, $N$ denotes the number of intersections, i.e., agents.
\subsubsection{Methods Compared with}
% 3. Describe the baseline method
To evaluate the efffectiveness of our proposed algorithm,
we compare it with serval SOTA traffic signal control methods,
including both tradictional approach and RL methods.

\begin{itemize}
    \item Fixed time control: a pre-defined rule-based traffic control method, which uses a four-stage phase sequence control.
    \item Independent Deep Q Network (IDQN): a decentralized method where each agent learns a Q network to maximize its reward independently without interacting with each other. 
    \item Multi-Agent Deep deterministic Policy Gradient (MADDPG): each agent learns a deterministic policy, and this method takes into account the interaction between different agents by sharing the experience pool and employing collaborative training methods.
    \item Independent Proximal Policy Optimization (IPPO): a decentralized actor-critic-based method where each agent learns a truncated advantage function and a policy function to improve performance.
    \item Graph Convolution RL (DGN): a standard graph-based RL method that uses graph convolution neural network and self-attention mechanism to extract the traffic feature, thereby learning the Q-values of the agents.
    % \item Graph-based Proximal Policy Optimization (GPPO): a MARL method inspired by Actor-Attention-Critic (MAAC) method \cite{iqbal2019actor}. We modify the policy optimization from REINFORCE to PPO algorithm.
    % This approach employs self-attention mechanism to extract the importance of traffic features among agents. By incorporating the spatiotemporal information from different agents, it facilitates better coordination.
    % To the best of our knowledge, this method is first applied in the domain of traffic signal control within this paper.
 \end{itemize}

\subsubsection{Simulation Settings}
% 2. Describe the traffic signal controller and algorithm setting
The simulation runs $T=2500s$ per episode,
and the traffic signal controllers update every $\Delta t=5s$.
To ensure the safety at the intersection,
we set the yellow phase to 2s,
the minimum and maximum green time $g_{min}, g_{max}$ to 5s and 50s, respectively.

In the fixed time control, we set the green phase for each traffic signal controller as 25s.
And for all RL methods, we use the same training parameters.
All methods are trained for 100 episodes, Table \ref{MDPset} shows the hyperparameters settings of the reward function in MDP and \algoref{BRGEHH}.

\begin{table}[!htp]
    \begin{center}
        \caption{Hyperparameter settings in MDP}
        \setlength\tabcolsep{4pt}
        \begin{tabular}{ccc}
            \hline
            Variable & Notation & Value \\ 
            \hline
            extra reward & $\kappa_1$ & 25 \\
            reduce congestion parameter & $\kappa _2$ & 5 \\
            increase congestion parameter & $\kappa _3$ & 5 \\
            Q network learning rata & $\alpha _Q$ & 0.01 \\
            critic learning rate & $\iota _{\mathrm{c}}$ & 0.01 \\
            actor learning rate & $\iota _{\mathrm{A}}$ & 0.001 \\
            hyperparameter of clip function & $\epsilon$ & 0.2 \\
            minibatch size & $N_{b}$ & 32 \\ \hline
        \end{tabular}
        \label{MDPset}
    \end{center}
\end{table}

\subsubsection{Experimental Results}
We compare our proposed method with the baseline methods on both the 5$\times$5 traffic grid and non-Euclidean traffic network.
\figref{grid5x5_1} shows the comparison results of our proposed method and all the baseline algorithms on a 5$\times$5 traffic grid.
% Due to the divergence issues of the IDQN and MADDPG method, as shown in \figref{grid_rw}, we excluded this algorithm from the results to obtain a more intuitive simulation outcome.
% By comparing the waiting time at intersections during the 100 episodes, 
It is evident that, under the current parameter settings, the IDQN and MADDPG methods fail to alleviate traffic congestion, as shown in \figref{grid_rw} that the reward of IDQN and MADDPG have been negative per episode.
After excluding these two algorithms, as shown in \figref{grid5x5_1}, 
our proposed algorithm achieves similar performance to the two SOTA algorithm IPPO, DGN.
Besides, compared to fixed time control, IDQN, MADDPG, our proposed mehtod achieves reduce the congestion on the traffic network.
% our proposed algorithm exhibits better performance than IPPO, DGN and fixed time control method,
% and achieves reduced delay and smoother traffic flow conditions compared to other methods.

\begin{figure}
    \centering
    \subfigure[Waiting Time Comparison]{
        \includegraphics[width=8cm]{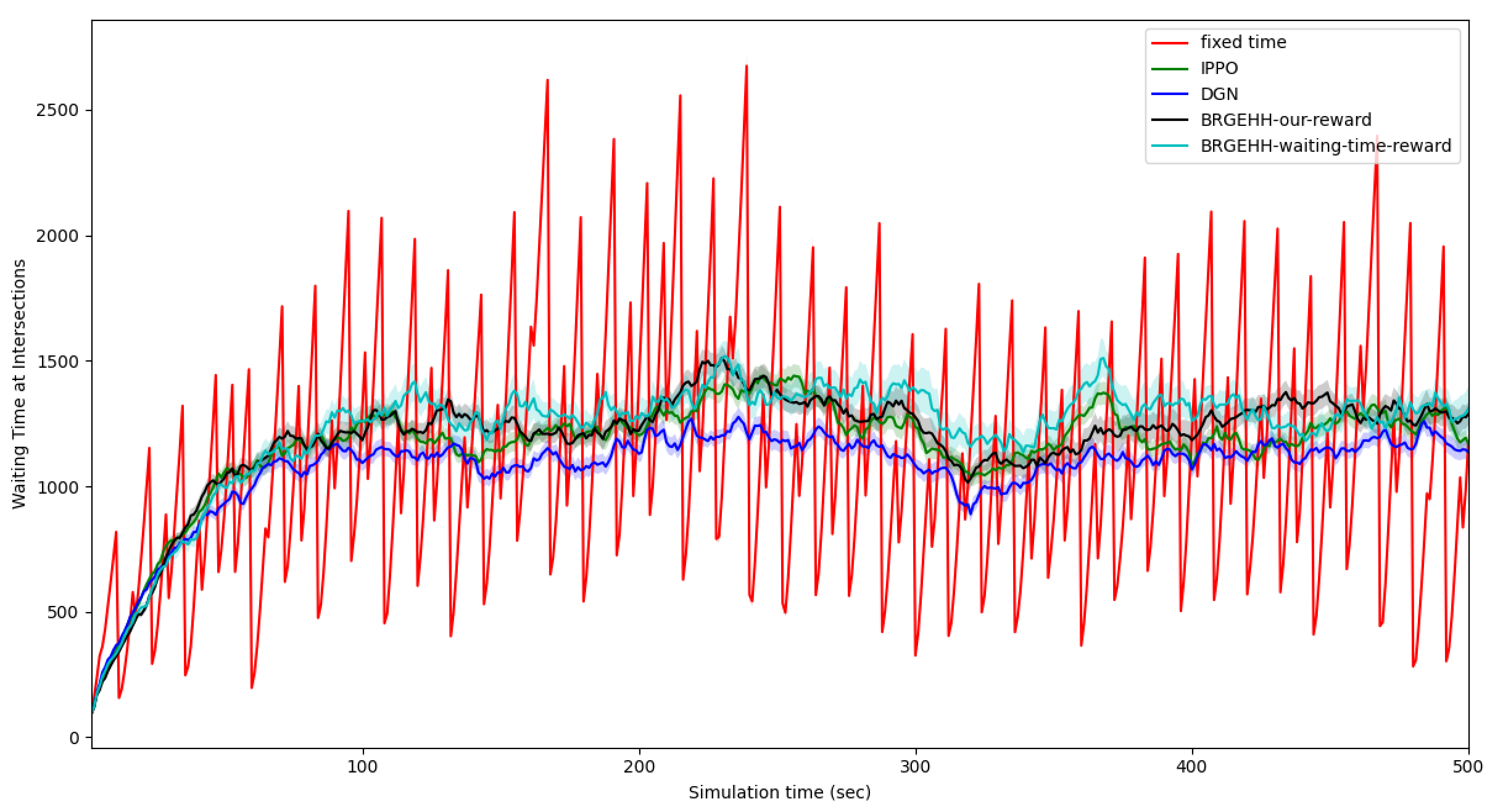}
    }
    \subfigure[Reward Comparison]{
        \includegraphics[width=4cm]{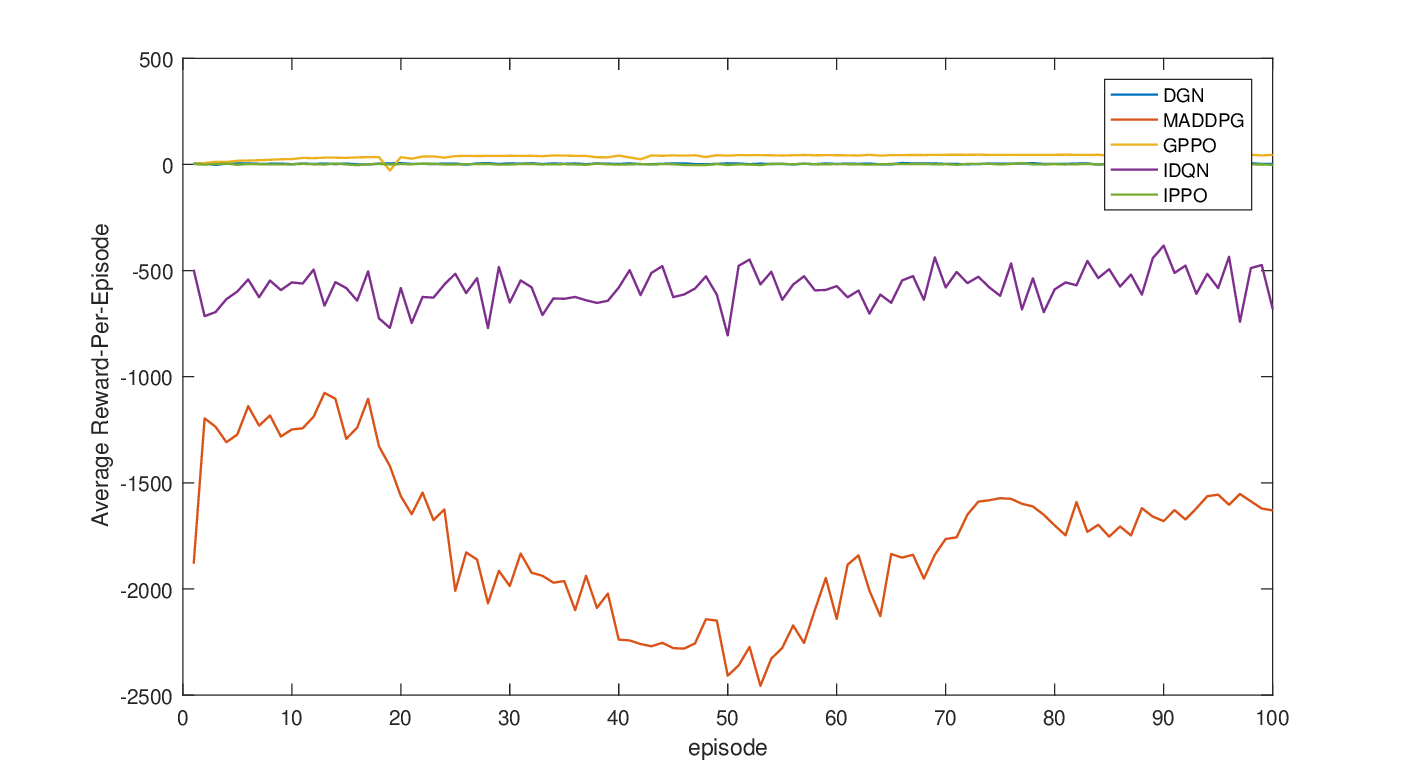}
        \label{grid_rw}
    }
    \subfigure[Flow Stability Comparison]{
        \includegraphics[width=4cm]{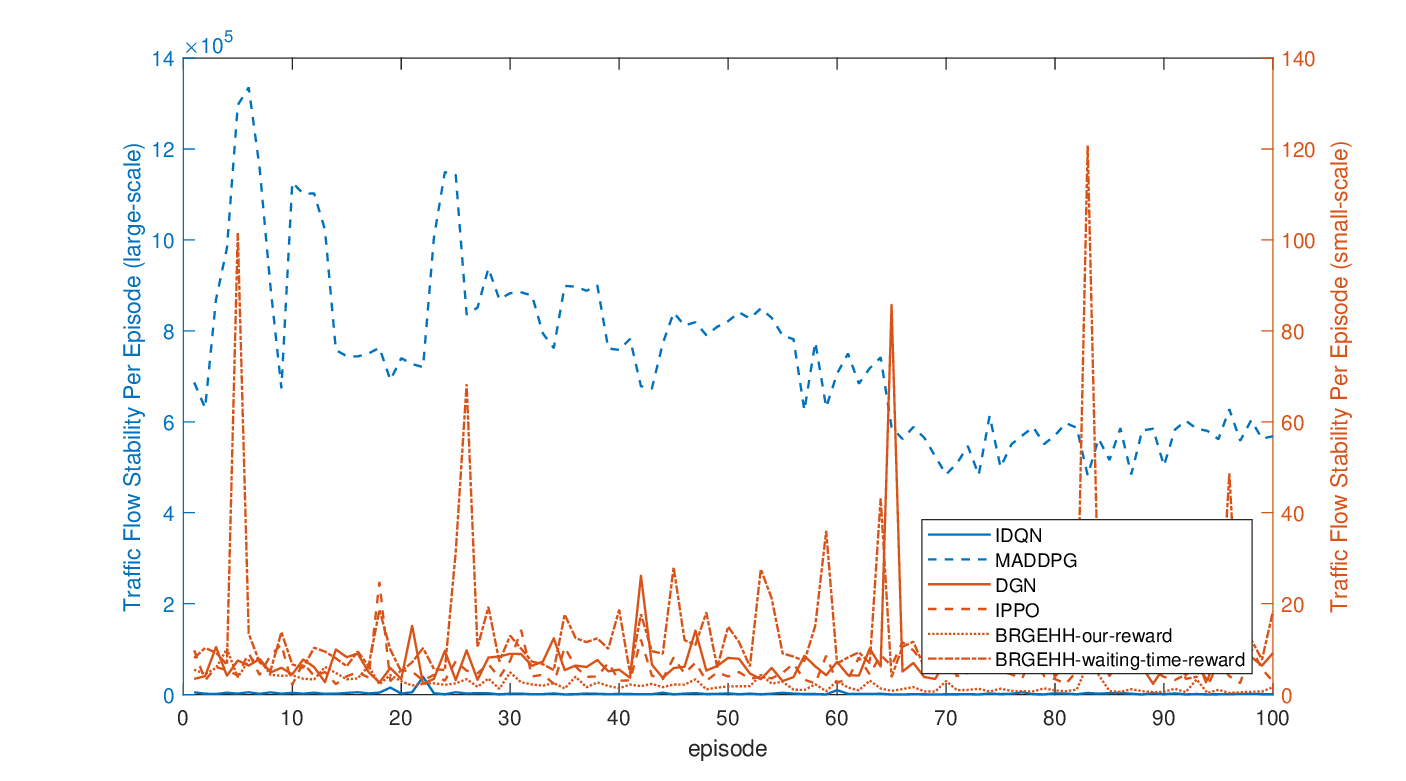}
    }
    \caption{Performance Comparison on the 5$\times$5 Synthetic Traffic Grid}
    \label{grid5x5_1}
\end{figure}

% Subsequently, as illustrated in \figref{NonEudian_1}, we conduct the results on an non-Euclidean traffic network.
Subsequently, as illustrated in \figref{NonEudian_1}, we conduct the results on an non-Euclidean traffic network.
Our proposed algorithm exhibits better performance than IPPO, DGN and fixed time control method,
and achieves reduced delay and smoother traffic flow conditions compared to other methods.
Meanwhile, we conduct an ablation experiment, revealing that the performance obtained using reward function \equref{reward1} is inferior to that achieved using the reward function proposed in this paper.
% and achieve results similar to those in the traffic grid shown in \figref{grid_network}.

% The results once again comfirm that our proposed multi-agent actor-critic framework not only reduces congestion and improve flow stability at intersection in the traffic grid, but also achieves better performance in non-Euclidean traffic networks compared to the SOTA methods.

% \begin{figure}
%     \centering
%     \includegraphics[width=8cm]{}
%     \caption{Performance Comparison on the Non-Eudian Synthetic Traffic Network}
%     \label{NonEudian_1}
% \end{figure}

\begin{figure}
    \centering
    \subfigure[Waiting Time Comparison]{
        \includegraphics[width=8cm]{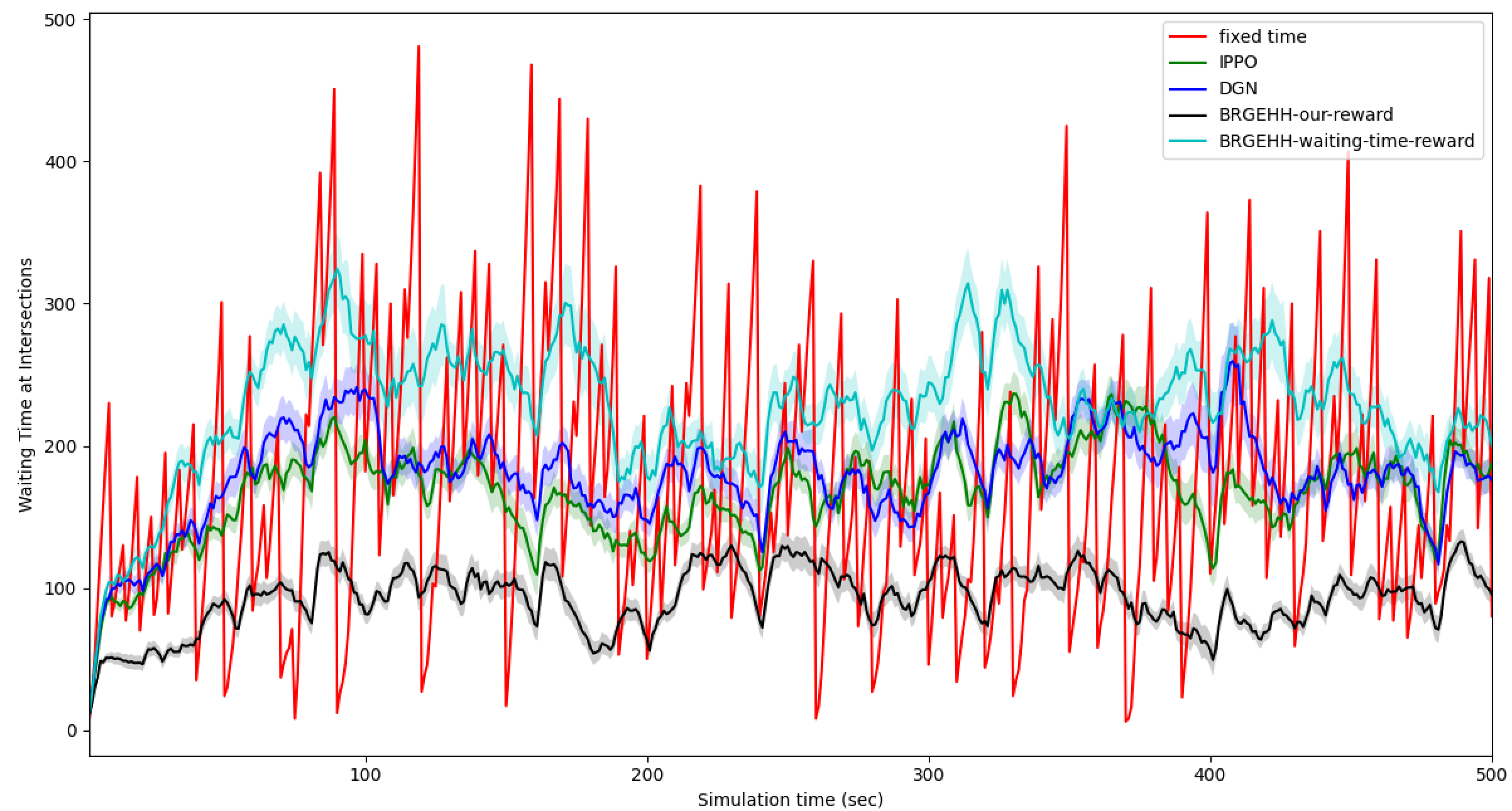}
    }
    \subfigure[Average Reward Comparison]{
        \includegraphics[width=4cm]{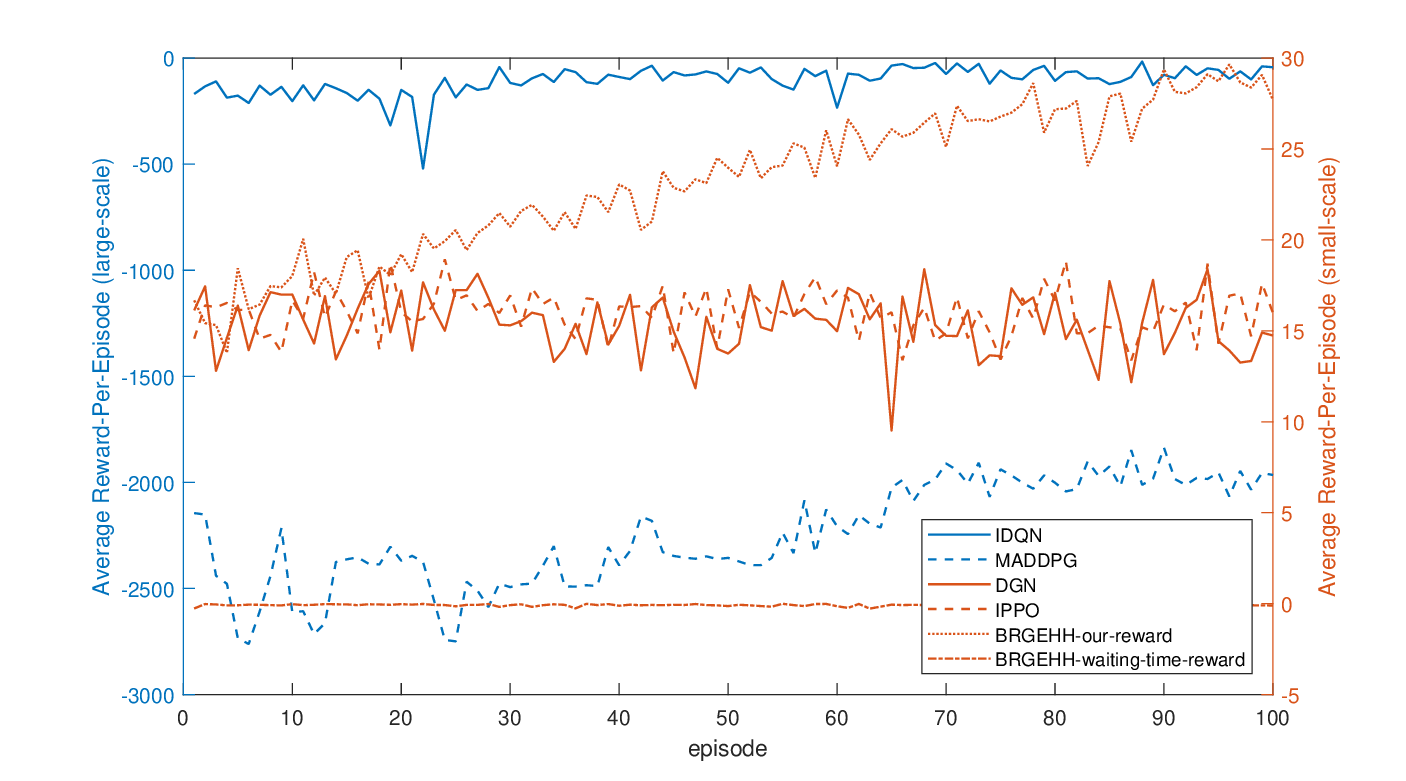}
        \label{non_rw}
    }
    \subfigure[Flow Stability Comparison]{
        \includegraphics[width=4cm]{img/non_result_sta.eps}
    }
    \caption{Performance Comparison on the Non-Euclidean Synthetic Traffic Network}
    \label{NonEudian_1}
\end{figure}

\section{Conclusion} \label{chapter6}
% In this paper, we introduce a novel multi-agent actor-critic framework with an interpretable influence mechanism based on the EHH NN.
% Additionally, we use the BReLU neural network as function approximator for both the value and policy functions, and thus construct the PWL-actor-critic framework.
% Compared to the traditional ReLU neural network, the PWL-actor-critic framework approximated with BReLU neural network offers two distinct advantages.
% First, it coincides with the conclusion that minimizing PWL functions over polyhedron yields PWL solutions.
% Second, the approximation of the global value function and local policy functions using BReLU neural network provides a more precise approximation than that of ReLU.
% Moreover, we propose an influence mechanism based on the EHH neural network to capture the spatial dependencies in traffic information and enhance collaborative control among different intersections.
% The EHH neural network can indicate the importance of the input variables using ANOVA decomposition, providing a deeper understanding of the underlying relationships within the data due to its network interpretability.
% The simulation experiment is conducted on both the synthetic traffic grid and non-Euclidean traffic network. 
% The results demonstrate the effectiveness of the influence mechanism and the multi-agent actor-critic framework,
% which can effectively extract important information and coordinate signal control across different intersections, resulting in lower and more sustainable intersection delays on the whole traffic network.

In this paper, we have introduced a novel multi-agent actor-critic framework with an interpretable influence mechanism based on the EHHNN.
Specifically, we used the BReLU neural network as a function approximator for both the value and policy functions and thus construct the PWL-actor-critic framework.
Besides, the proposed influence mechanism based on the EHHNN can capture the spatiotemporal dependencies in traffic information without knowing the pre-defined adjacency matrix. The importance of the input variables using ANOVA decomposition of EHHNN, providing a deeper understanding of the underlying relationships within the data.
Moreover, the approximation of the global value function and local policy functions using BReLU neural network not only provides a more precise approximation but also coincides with the conclusion that minimizing PWL functions over polyhedron yields PWL solutions. 
Simulation experiments on both the synthetic traffic grid and non-Euclidean traffic network demonstrate the effectiveness of the proposed multi-agent actor-critic framework, which can effectively extract important information and coordinate signal control across different intersections, resulting in lower delays in the whole traffic network.
\ifCLASSOPTIONcaptionsoff
  \newpage
\fi

\begin{refcontext}[sorting=none]
\printbibliography
\end{refcontext}

% \bibliographystyle{plain} unsrt
% \bibliography{bibtex/bib/IEEEexample.bib}

\end{document}